\documentclass[preprint,12pt]{report}

\usepackage[utf8]{inputenc}
\usepackage[T1]{fontenc}

\usepackage{amssymb}
\usepackage{amsmath,amssymb,bm}
\usepackage{xfrac}
\usepackage{graphicx}
\usepackage{hyperref}
\usepackage{verbatim}
\usepackage[linesnumbered,lined,boxed,commentsnumbered,ruled]{algorithm2e}
\SetKwComment{Comment}{}{}
\usepackage{algpseudocode}
\usepackage{listings}
\usepackage{changepage}
\usepackage{xspace}
\usepackage{adjustbox}
\usepackage{multirow}
\usepackage{adjustbox}
\usepackage{multirow}
\usepackage{amsmath}
\usepackage{amsfonts}
\usepackage{booktabs}
\usepackage{float}
\usepackage{pgfplots} 
\usepackage{comment}

\usepackage{lmodern}
\usepackage{anyfontsize}
\usepackage{figparameters}
\usepackage[normalem]{ulem}     
\usepackage{xcolor}             

\lstset{
  basicstyle=\ttfamily,
  mathescape
}

\pgfplotsset{compat=1.16} 

\hypersetup{pdfauthor=author}

\usepackage{defs}
\usepackage{fca}

\begin{document}

\title{A minimal base or a direct base? That is the question!}


\author{Jaume Baixeries \\
Computer Science Department. \\ Universitat Politècnica de Catalunya \\
Jordi Girona, 1-3, Barcelona, Catalonia \\
Amedeo Napoli \\
Universit{\'e} de Lorraine, CNRS, LORIA \\
54000 Nancy \\ 
France \\
}

\maketitle

\begin{abstract}

In this paper we revisit the problem of computing the closure of a set of attributes given a basis of dependencies or implications.
This problem is of main interest in logics, in the relational database model, 
in lattice theory, and in Formal Concept Analysis as well.
A basis of dependencies may have different characteristics, 
among which being ``minimal'', e.g., the \dgbasis, or being ``direct'', 
e.g., the \canonical and the \dbasis.
Here we propose an extensive and experimental study of the impacts of minimality  and directness on the closure algorithms.
The results of the experiments performed on real and synthetic datasets are analyzed in depth, and suggest a different and fresh look at computing the closure of a set of attributes w.r.t. a basis of dependencies.

This paper has been submitted to the 
\textit{International Journal of Approximate Reasoning}.

\textbf{Keywords:}
Functional dependencies, Implications, Horn Clauses, Dependency Covers, Closure.

\end{abstract}




%
\section{Introduction}
\label{sec:introduction}

In this paper, we are interested in analyzing the characteristics of different covers or bases of dependencies, the way they are used and their related 
efficiency.
A dependency or implication $\ad{X}{Y}$ can be read as \textit{$X$ implies $Y$} 
and follows the so-called Armstrong axioms \cite{Armstrong74}.
Dependencies are ``first class citizens'' in different fields of Computer Science, e.g., 
Horn clauses in logics, functional dependencies in the relational database model, 
implications in Formal Concept Analysis (FCA).

This paper is a follow-up of \cite{BaixeriesCKN23} where we studied three different covers, 
namely the \minimal in relational database theory \cite{Maier83}, the \canonical (\cdub) 
in lattice theory \cite{BertetM10}, 
also known as the \textit{canonical cover} in relational database theory 
(\cite{Maier83}, \cite{MannilaR92}),
and the \duquenne aka canonical basis in FCA \cite{GuiguesD86}.
These covers are introduced and characterized in many different textbooks,
e.g., in database theory \cite{Maier83,MannilaR92,AbiteboulHV95},
in logics \cite{CramaH11}, in lattice theory~\cite{BertetM10},
and in FCA \cite{GanterW99,GanterO16}.
Moreover, Marcel Wild proposes in \cite{Wild17} an extensive study about 
implication bases and the relations existing between the different fields in which they are used.

While the \cdub (canonical cover) is of first importance in database theory \cite{PapenbrockEMNRZ15}, 
the \duquenne (\dgbasis) is the implication basis of reference in FCA.
In particular, authors in \cite{BazhanovO11,BazhanovO14} analyze the computation 
of the \dgbasis w.r.t. three closure algorithms, namely \closure, \linclosure, and \wild.
In this paper we follow these tracks and we extend this seminal work in several directions, 
as we analyze not only the \dgbasis but also the \cdub and the \dbasis \cite{AdarichevaNR13,AdarichevaNV24}.
The \dbasis has gained more importance these last years and the comparison of these three implication bases is one originality of the present paper.
For doing so, we characterize the behaviors of several combinations of the three closure algorithms and we try to evaluate the importance for a basis of being minimal or direct.

The construction of a cover depends on computing the closure $\clo(X)$ of a set of attributes $X$ w.r.t. a set of dependencies $\Sigma$ thanks to the Armstrong axioms. 
Given a set of dependencies $\Sigma$, 
there may exist different sets of dependencies that are \textit{equivalent} modulo Armstrong axioms.
Then two alternative cases can be considered for covers,
(i) a cover is \textit{minimal} when it contains a minimal number of dependencies, 
i.e., minimal in order to maintain the equivalence modulo Armstrong axioms,
(ii) a cover is \textit{direct} if only one pass over the set $\Sigma$ 
is sufficient to compute the closure $\clo(X)$ for any set of attributes $X$.
For example, the \dgbasis is minimal while the \cdub and the \dbasis are direct.

Given a set of dependencies $\Sigma$ and a set of attributes $X$, to decide what should be the characteristics of $\Sigma$ to be used to perform the computation of $\clo(X)$ is an important problem because the number of dependencies that may hold in an even small dataset can be huge, and because costly operations are applied to  $\Sigma$.
Then the debate can be stated in the following terms:
(\textit{q1}) is it better to have a cover with a smaller set of dependencies that may require more than one pass to compute a closure,
or dually, (\textit{q2}) is it better to have a larger cover ensuring that only one pass is required to compute the closure?
To be complete, the question of the algorithm computing $\clo(X)$ should also be raised.

Checking whether it is better to use a direct basis 
or a minimal basis has not yet been fully explored.
For example, the minimality of an implication basis has an effective impact 
on a process such as attribute exploration and its application to knowledge engineering, 
see e.g, \cite{BaaderGSS07,BaaderD08,VolkerR08,RysselDB14}.
In addition, the fact that an implication basis is direct received a lot of attention 
in lattice theory~\cite{BertetM10,AdarichevaNR13,AdarichevaN17} and in
FCA~\cite{GanterW99,GanterO16,LorenzoBCEM18},
while this characteristic is ignored in database theory even if the \cdub is 
the implication basis of reference.
Accordingly, a third question is addressed and discussed in this paper:
(\textit{q3}) regardless of the hypothetical reasons why a direct basis is preferred in database theory instead of a minimal basis, 
can a \dgbasis be accepted outside the FCA community as a valid alternative to compute $\clo$?
In order to answer such questions we have performed a set of experiments over real and synthetic datasets,
comparing the computing of the closure $\clo$ with \closure, \linclosure, 
and \wild in combination with two direct bases, \cdub and \dbasis, and a minimal basis, the \dgbasis.
The results presented herein suggest that in certain conditions (detailed farther) 
the \dgbasis may be a robust and valid alternative to the \cdub and the \dbasis
to compute $\clo$.

This paper is organized as follows.
Firstly we introduce the definitions used in this paper in Section~\ref{sec:definitions}.
In Section~\ref{sec:algos} we propose three algorithms for computing a closure, 
namely \closure, \linclosure, and \wild.
We present the characteristics of the three bases of dependencies in Section~\ref{sec:three-bases}.
In Section~\ref{sec:impact} we analyze the impacts of using a direct basis when computing a closure and we propose new versions of the three algorithms adapted to the processing of direct bases.
Finally, we present a series of experiments in Section~\ref{sec:experiments} 
whose results are analyzed in Section~\ref{sec:results},
before ending with a discussion and conclusions in Section~\ref{sec:discussion}.



\section{Definitions}
\label{sec:definitions}

Although we refer to \cite{Date00} in most of the cases, the definitions below can be found as well in many different textbooks and papers related to database theory, logics, and FCA.
All along this paper, we consider a tabular dataset whose column labels form the \textit{set of attributes~$\cjtatrib$}, i.e., the set of interest in the following.
The row labels of the dataset determine the transactions or the objects whose descriptions are given by the columns.
We use capital letters to denote sets of attributes: $A,B,X,Y \dots \subseteq \cjtatrib$
and lowercase letters for single attributes: $a,b,x,y \dots \in \cjtatrib$.

Given $X, Y \subseteq \cjtatrib$, the fact that a dependency $\ad{X}{Y}$ is \textit{valid} or \textit{true} depends on the kind of dependency at hand.
For example, an instance in which a Horn clause is true is an assignment of propositional variables.
An instance in which a functional dependency is valid is a set of rows 
in a many-valued tabular dataset.
In the same way, an instance where an implication is true in a formal context 
is a given set of objects.

Then, the dependency $\ad{X}{Y}$ \textit{holds} should be understood as $\ad{X}{Y}$ holds for all the instances where it is valid or true.
In addition, ``If $\ad{X}{Y}$ holds, then $\ad{XZ}{YZ}$ holds''
can be rephrased as
``In any instance in which $\ad{X}{Y}$ is valid, the dependency $\ad{XZ}{YZ}$ is valid as well''.

\begin{definition}[\cite{Date00}]
\label{def:armstrong}
Given a set of attributes $\cjtatrib$, for any $X,Y,Z \subseteq \cjtatrib$, 
the Armstrong axioms are:
\begin{enumerate}
    \item \textbf{Reflexivity}: If $Y \subseteq X$, then $\ad{X}{Y}$ holds.
    \item \textbf{Augmentation}. If $\ad{X}{Y}$ holds, then $\ad{XZ}{YZ}$ holds.
    \item \textbf{Transitivity}. If $\ad{X}{Y}$ and $\ad{Y}{Z}$ hold, then $\ad{X}{Z}$ holds.
\end{enumerate}
\end{definition}

The Armstrong axioms allow us to define the closure of a set of dependencies as the iterative application of these axioms over a set of dependencies.

\begin{definition}[\cite{Date00}]
\label{def:closureDeps}
$\Sigma^+$ denotes the closure of a set of dependencies $\Sigma$
and is constructed by the iterative application of the Armstrong axioms over $\Sigma$.

This iterative application terminates when no new dependency can be added, and it is finite.
Therefore, $\Sigma^+$ contains the largest set of dependencies holding in all instances in which all the dependencies in $\Sigma$ hold.
\end{definition}

The closure of a set of dependencies induces the definition of the cover of such a set of dependencies.

\begin{definition}[\cite{Date00}]
\label{def:cover}
The \textbf{cover} or \textbf{basis} of a set of dependencies $\Sigma$ is any set $\Sigma'$ such that $\Sigma'^+ = \Sigma^+$.
\end{definition}

We define now the \textit{closure} of a set of attributes $X \subseteq \cjtatrib$ with respect to a set of dependencies $\Sigma$.

\begin{definition}[\cite{Date00}]
\label{def:closureattrib}
The \textbf{closure} of $X$ with respect to a set of dependencies $\Sigma$ is
$$\clo(X) = X \cup \conjunt{ Y \mid \ad{X}{Y} \in \Sigma^+}$$
\end{definition}

In other words, the closure operation w.r.t. $\Sigma$ returns the largest set of attributes
implied by $X$, denoted as $\Sigma \models \ad{X}{\clo(X)}$.
Therefore, the implication problem $\Sigma \models \ad{X}{Y}$ boils down to testing whether $Y \subseteq \clo(X)$ (see~\S~4 in \cite{BeeriB79}).


Now we introduce two main characteristics of a cover, being \textit{direct}
and being \textit{minimal}.
The definition of a minimal cover is independent of how $\clo(X)$ is computed:

\begin{definition}
\label{def:minimal}
Let $\Sigma$ be a set of dependencies.
We say that $\Sigma_{min}$ is a \textbf{minimal basis} of $\Sigma$ iff:
\begin{enumerate}
\item $\Sigma^+ = \Sigma_{min}^+$.
\item There is no smaller basis verifying the above property.
\end{enumerate}
\end{definition}

We now propose an alternative definition of the closure of a set of attributes, 
contrasting Definition~\ref{def:closureattrib}.

\begin{definition}[\cite{BertetM10}]
\label{def:pass}
Let $\Sigma$ be a set of dependencies and let $X \subseteq \cjtatrib$
be a set of attributes.
A \textbf{pass} over $X$ w.r.t. $\Sigma$ is defined as:
\[ \pass(X) = X \cup \conjunt{B \mid A \subseteq X \text{ and } \ad{A}{B} \in \Sigma} \]
\end{definition}

Then, the closure of a set of attributes $X$ can be defined as follows:

\begin{definition}[\cite{BertetM10}]
\label{def:direct}
Let $\Sigma$ be a set of dependencies. 
\[\clo(X) = \Pi^1_\Sigma(X) \cup \Pi^2_\Sigma(X) \cup \dots \cup \Pi^{k-1}_\Sigma(X)\]
where $\Pi^1_\Sigma(X) = \Pi_\Sigma(X)$ and $\Pi^i_\Sigma(X) = \pass(\Pi^{i-1}_\Sigma(X))$.
\end{definition}

Thus the computing of $\clo(X)$ relies first on computing $\pass(X)$, and then,
computing $\pass(\pass(X))$, and so on, until a fixed point
$\Pi^k_\Sigma(X) = \Pi^{k-1}_\Sigma(X)$ is reached.
We can now define a direct basis:

\begin{definition}[\cite{BertetM10}]
\label{def:directbase}
Let $\Sigma$ be a set of dependencies.
$\Sigma$ is a \textbf{direct basis} if for all $X \subseteq \cjtatrib$:
\[ \clo(X) = \pass(X) \]
\end{definition}

The notion of direct basis is well-known in lattice theory and FCA, 
but seems to be completely alien to the DB community.
References to a direct basis can be found in~\cite{BertetM10} and, earlier, in \cite{GanterW99}.



\def\algClosure{
\begin{function}
\caption{Closure($X,\Sigma$)}\label{alg:closure}
\SetKwInOut{Input}{Input}\SetKwInOut{Output}{Output}
\DontPrintSemicolon

\Input{A set of attributes $X \subseteq \cjtatrib$ and a set of implications
$\Sigma$}
\Output{$\clo(X)$}

\BlankLine

$stable \gets \textbf{false}$\;

\While(\tcp*[f]{Outer loop}){not $stable$} 
{
    $stable \gets \textbf{true}$\;
    \ForAll(\tcp*[f]{Inner loop}){$\ad{A}{B} \in \Sigma$} 
    { 
            \If(\tcp*[f]{deps}){$A \subseteq X$}
            {
                $X \gets X \cup B$\;
                $stable \gets false$\;
                $\Sigma \gets \Sigma \setminus \conjunt{\ad{A}{B}}$\;
            }
    }
}
\Return{X}\;

\end{function}
}

\def\algLinclosure{
\begin{function}
\caption{LinClosure($X,\Sigma$)}\label{alg:linclosure}
\SetKwInOut{Input}{Input}\SetKwInOut{Output}{Output}
\DontPrintSemicolon

\Input{A set of attributes $X \subseteq \cjtatrib$ and a set of implications $\Sigma$}
\Output{$\clo(X)$}

\BlankLine

\ForAll(\tcp*[f]{Preparation}){$\ad{A}{B} \in \Sigma$} 
{
    $count[\ad{A}{B}] \gets \mid A \mid$


    \ForAll{$a \in A$}
    {
        $list[a] \gets list[a] \cup \conjunt{\ad{A}{B}}$
    }
}

$update \gets X$

\BlankLine

\While(\tcp*[f]{Outer loop}){$update \neq \emptyset$} 
{
    choose $m \in update$
    
    $update \gets update \setminus \conjunt{m}$

    \BlankLine

    \ForAll(\tcp*[f]{Inner loop }){$\ad{A}{B} \in list[m]$} 
    { 
        $count[\ad{A}{B}] \gets count[\ad{A}{B}] - 1$
        

        \If(\tcp*[f]{deps}){$count[\ad{A}{B}] = 0$}
        {
                $add \gets B \setminus X$
                
                $X \gets X \cup add$
                
                $update \gets update \cup add$
        }
    }
}

\Return{X}\;

\end{function}
}

\def\algWildClosure{
\begin{function}
\caption{WildClosure($X,\Sigma$)}\label{alg:wildclosure}
\SetKwInOut{Input}{Input}\SetKwInOut{Output}{Output}
\DontPrintSemicolon

\Input{A set of attributes $X \subseteq \cjtatrib$ and a set of implications $\Sigma$}
\Output{$\clo(X)$}

\BlankLine

\ForAll(\tcp*[f]{Preparation}){$m \in \cjtatrib$} 
{
    \ForAll{$\ad{A}{B} \in \Sigma$}
    {
        \If{$m \in A$}
        {
            $list[m] = list[m] \cup \conjunt{\ad{A}{B}}$
        }
    }
}

\BlankLine

$stable \gets false$

\While(\tcp*[f]{Outer loop}){$not~stable$} 
{
    $stable \gets true$
    
    $\Sigma_1 \gets \bigcup_{m ~\in~ \cjtatrib \setminus X} list[m]$
    
    \BlankLine

    \ForAll(\tcp*[f]{Inner loop / deps}){$\ad{A}{B} \in \Sigma \setminus \Sigma_1$} 
    { 
            $X \gets X \cup B$
            
            $stable \gets false$
    }

    $\Sigma \gets \Sigma_1$
}
\Return{X}\;

\end{function}
}

\def\algClosureAccum{
\begin{function}[H]
\caption{Closure++($X,\Sigma$)}\label{alg:closure_accu}
\SetKwInOut{Input}{Input}\SetKwInOut{Output}{Output}
\DontPrintSemicolon

\Input{$X \subseteq \cjtatrib$ and a set of implications $\Sigma$}
\Output{$X \cup \conjunt{B \mid \ad{A}{B} \text{ and } A \subseteq X}$}

\BlankLine

$result \gets \emptyset$\;

\ForAll{$\ad{A}{B} \in \Sigma$} 
{ 
    \If{$A \subseteq X$}
    {
        $result \gets result \cup B$\;
    }
}

\Return{$X \cup result$}\;

\end{function}
\bigskip
}


\def\algClosureDirect{
\begin{function}
\caption{ClosureDirect($X,\Sigma$)}\label{alg:closuredirect}
\SetKwInOut{Input}{Input}\SetKwInOut{Output}{Output}
\DontPrintSemicolon

\Input{A set of attributes $X \subseteq \cjtatrib$ and a 
\textbf{direct} basis of implications $\Sigma$}
\Output{$\clo(X)$}

\BlankLine

\ForAll(\tcp*[f]{Inner loop}){$\ad{A}{B} \in \Sigma$} 
{ 
    \If(\tcp*[f]{deps}){$A \subseteq X$}
    {

            $X \gets X \cup B$\; 
        }
}

\Return{X}\;

\end{function}
}

\def\algLinclosureDirect{
\begin{function}
\caption{LinClosureDirect($X,\Sigma$)}\label{alg:linclosuredirect}
\SetKwInOut{Input}{Input}\SetKwInOut{Output}{Output}
\DontPrintSemicolon

\Input{$X \subseteq \cjtatrib$ and a 
\textbf{direct} basis of implications $\Sigma$}

\Output{$\clo(X)$}

\BlankLine

\tcp*[l]{The preparation part is the same as in Linclosure}

\If{$\Sigma$ \text{is a \dbasis}}
{
    $update \gets \Sigma_0(X)$
}
\Else
{
    $update \gets X$
}

$add \gets \emptyset$

\While(\tcp*[f]{Outer loop}){$update \neq \emptyset$} 
{
    choose $m \in update$
    
    $update \gets update \setminus \conjunt{m}$

    \BlankLine

    \ForAll(\tcp*[f]{Inner loop }){$\ad{A}{B} \in list[m]$} 
    { 
       $count[\ad{A}{B}] \gets count[\ad{A}{B}] - 1$
        
        \If(\tcp*[f]{deps}){$count[\ad{A}{B}] = 0$}
        {

                $add \gets add \cup B$
                
        }
    }
}

\Return{$X \cup add$}
\end{function}
}

\def\algWildClosureDirect{
\begin{function}
\caption{WildClosureDirect($X,\Sigma$)}\label{alg:wildclosuredirect}
\SetKwInOut{Input}{Input}\SetKwInOut{Output}{Output}
\DontPrintSemicolon

\Input{A set of attributes $X \subseteq \cjtatrib$ and a 
\textbf{direct} basis of implications $\Sigma$}\Output{$\clo(X)$}

\tcp*[l]{The preparation part is the same as WildClosure}

\If{$\Sigma$ \text{is a \dbasis}}
{
    $X \gets \Sigma_0(X)$
}
\Else
{
    $X \gets X$
}

\BlankLine

$\Sigma_1 \gets \bigcup_{m ~\in~ \cjtatrib \setminus X} list[m]$

\BlankLine

\ForAll(\tcp*[f]{Inner loop / deps}){$\ad{A}{B} \in \Sigma \setminus \Sigma_1$} 
{ 
        $X \gets X \cup B$
}

\Return{X}\;
\end{function}
}



\section{Algorithms Computing the Closure of a Set of Attributes}
\label{sec:algos}

In this section, we focus on the most well-known algorithms computing the closure of a set of attributes $X$, namely \closure, \linclosure, and Wild Closure.

\subsection{The \closure Algorithm}
\label{sec:closure}

\algClosure

\closure is the \textit{classical} algorithm computing $\clo(X)$,
which is detailed in many textbooks, e.g., in \cite{Maier83,AbiteboulHV95,GanterO16}.
Here, we adapt the version proposed  in~\cite{GanterO16} (Algorithm~14, page~93).
The computing of a given pass $\pass(X)$ over $X$ is performed in lines $4-10$.
The outer loop is iterated in line $2-11$ until a fixed point is found.
A dependency which is processed in lines $5-9$ is then removed in line $8$.

The complexity of this algorithm is discussed in the related references, 
and the general consensus is that it is \textit{quadratic} w.r.t. the input
(see \cite{BaixeriesCKN23} for more details).

\subsection{The \linclosure Algorithm}
\label{sec:linclosure}

\algLinclosure

The \linclosure algorithm \cite{BeeriB79} is an improved version of \closure consisting
in two parts:
the \texttt{preparation} in which the necessary data structures are  computed,
and the \texttt{computation} in which $\clo(X)$ is computed.
In \texttt{preparation}, two data structures are constructed, the role of which
is to ensure that only dependencies necessary to compute the closure
are considered while the others are ignored:

(i) for each attribute $x \in X$, the first structure records a pointer
to all dependencies $\ad{A}{B}$ such that $x$ appears in the left-hand side (LHS) $A$,

(ii) for each dependency $\ad{A}{B}$, the second structure includes a counter
recording the number of attributes in $A$ already visited during
computing.

Since only reduced and clarified contexts are processed in this paper (see Section~\ref{sec:experiments}), there is no dependency with an empty LHS.
This means that dependencies $\ad{A}{B}$ such that $|A| = 0$ are not considered in the preparation part of \ref{alg:linclosure}.

There is a general consensus about the complexity of \linclosure, 
which is linear w.r.t. the size of $\Sigma$ for both the preparation part
and the computation part~\cite{BeeriB79}.
The complexity of the preparation part is assumed to be of the same complexity as the rest of the algorithm.
One explanation of this fact appears in the pioneering paper~\cite{BeeriB79}, 
page $47$ in the second paragraph 
(this paragraph is adapted to fit names in Algorithm \linclosure):
\begin{quote}
\textit{For each attribute in [update], the [outer] loop follows a
  constant number of steps for each occurrence of that attribute on
  the left side of an FD in $\Sigma$.
  Similarly, each right side of an FD in $\Sigma$ is visited at most once
  in [the outer loop].
  Thus [the outer loop] is also $\bigO(|\Sigma|)$ as is the entire
  Algorithm.}
\end{quote}

\subsection{The \wild Algorithm}                                          %
\label{sec:wildclosure}                                                   %

Below we present a slightly more compact form of the \wild algorithm
borrowed from \cite{BazhanovO14}.
The \wild Algorithm \cite{Wild95} aims at ensuring that inside each outer loop
all the dependencies $\ad{A}{B}$ fulfilling the condition $A \subseteq X$
are selected.
The algorithm starts with one of the data structures also present
in \linclosure: for each attribute $x$ there is a list recording all the
dependencies $\ad{A}{B}$ such that $x$ is contained in $A$.
Then, it selects all dependencies $\ad{A}{B}$ such that $A \subseteq X$
to be processed in the loop in lines $12 - 15$.

Regarding the complexity of the algorithm, it is underlined in~\cite{Wild95} that:
\begin{quote}
\textit{
Algorithm 1 [Wild Closure] has complexity $\bigO(|\Sigma| |\cjtatrib|^2)$, 
which is actually the same as the complexity of Algorithm 0 [Closure].
Yet in practice Algorithm 1 [Wild Closure] takes a fraction of the time
of Algorithm 0 [Closure] and also of \linclosure.
Philosophy: Doing few set operations with big sets is better than doing
many set operations with small sets.}
\end{quote}

This apparent paradox between the asymptotic complexity of an algorithm
and its real performance is of interest and will be more deeply discussed
in Section~\ref{sec:experiments}.

\algWildClosure



\section{Three Bases of Dependencies}
\label{sec:three-bases}

In this section we briefly present the \cdub, the \dbasis, and the \dgbasis, 
that will be processed by the three algorithms introduced 
in Section~\ref{sec:algos}.
The \cdub and the \dbasis are direct and unit bases while the \dgbasis is a minimal non-unit basis.
Recall that the right-hand side (RHS) of a unit dependency is reduced to a single element and that a unit basis only includes units dependencies.

For the seek of fairness in the processing of all three bases, 
dependencies having the same LHS in the \cdub and the \dbasis
are replaced by a single dependency whose RHS is the union of the previous RHS.
For example, the unit dependencies $\ad{ab}{c}, \ad{ab}{d}, \ad{ab}{e}$ are replaced by $\ad{ab}{cde}$.

\subsection{The \canonical (\cdub)} 
\label{subsec:canonical}

The \cdub is deeply studied in~\cite{BertetM10} where different equivalent variations are examined.
This basis can be characterized as follows:

\begin{enumerate}
\item
  $\Sigma$ is a unit basis, i.e., the RHS of all dependencies in $\Sigma$ only include one single attribute.
\item
  $\Sigma$ is left-reduced.

  A dependency $\ad{X}{y}$ is left-reduced if, for all $X_i \subseteq X$,
  the dependencies $\ad{X_i}{y}$ do not hold.
  Stated differently, in a \cdub a LHS is minimal and is also called
  a \textit{minimal generator}.
\end{enumerate}

The \cdub may contain some redundancy.
For example, while the basis
$\Sigma = \conjunt{\ad{a}{b},\ad{b}{c},\ad{a}{c}}$ is left-reduced,
the dependency $\ad{a}{c}$ is redundant because
$\Sigma^+ = (\Sigma \setminus \conjunt{\ad{a}{c}})^+$.
The \cdub is not necessarily minimal but it is direct as specified 
in~Definition \ref{def:directbase} \cite{BertetM10}.


%
\subsection{The \dbasis}
\label{subsec:dbasis}

The \dbasis is introduced in \cite{AdarichevaNR13} and revisited in \cite{AdarichevaNV24} 
as a subset of the \cdub, i.e., it can be constructed by removing some dependencies from a \cdub.
In~\cite{AdarichevaNV24} it is explained that 
\textit{there may be an exponential gap between the two bases}.
Here we will follow the definition given in \cite{AdarichevaNV24} while an 
alternative definition is proposed in~\cite{LorenzoACEM15}.

The \dbasis can be considered as a subset of the \cdub composed of two subsets of implications.
The first subset, denoted as $\Sigma_0$, is a set of binary implications, 
i.e., LHS and RHS are of size $1$, and includes all binary implications lying in the \cdub.
$\Sigma_0 = \{\ad{a}{c}, c \in \phi(a) \setminus \{a\}, a,c \in \atrib\}$,
where $\phi$ is the closure operator related to the \dbasis
\footnote{%
      As underlined in~\cite{AdarichevaNV24} the closure operator $\phi$ is supposed to be \textit{standard}, 
      i.e., $\forall a \in \atrib, \phi(a) \setminus \{a\}$ is closed 
      and $\phi(a) \neq \phi(b)$ unless $a=b$, 
      which means in terms of FCA that the related context is clarified and reduced.
      Moreover, still in~\cite{AdarichevaNV24}, $\Sigma_0$ is denoted as $\Sigma^b$ and $\clozero$ as $\phi^b$.
}.
The second subset is defined w.r.t. the operator $\clozero$ and the notion of \textit{minimal D-generator}.
The operator $\clozero(X)$ computes the closure of a set $X$ w.r.t. 
the set of implications $\Sigma_0$.

\begin{definition}
\label{def:dbase_dgen}
A minimal generator $A$ of $c$, with $|A| \geq 2$, is a \textit{minimal D-generator} of $c$ if:
\begin{itemize}
\item
  $c \not\in \clozero(A)$,
\item
  for every minimal generator $A'$ of $c$, $A' \subseteq \clozero(A)$ implies $A' = A$.
\end{itemize}
\end{definition}

Then, $\Sigma_D = \Sigma_0 \cup \{\ad{A}{c}, A \in gen_D(c), c \in \atrib\}$ 
defines the \dbasis over $\atrib$, where $gen_D(c)$ denotes the family of D-generators of $c$.
Finally, let us mention this quotation taken from~\cite{AdarichevaNR13}:  

\textit{%
While the D-basis is not direct in this meaning of this term
[this refers to Definition~\ref{def:directbase}],
the closures can still be computed in a single iteration of the basis,
provided the basis was put in a specific order prior to computation.
}

In particular, this is why the \dbasis is called an \textit{ordered direct implication basis}, 
contrasting the \cdub, where the order is irrelevant (see for example~\cite{Wild17}).


%
\subsection{The \dgbasis}
\label{subsec:dgbasis}

The \dgbasis~\cite{GuiguesD86,GanterW99}, also called the \textit{Canonical Basis} in the FCA community, relies on pseudo-closed sets~\cite{GanterW99,GanterO16}.
It should be noticed that the \dgbasis is also introduced in~\cite{Maier83} where it is called the \textit{\minimum basis}.
Below we first recall the definition of a pseudo-closed set of attributes and then 
the definition of the \dgbasis.

\begin{definition}
Let $\Sigma$ be a set of dependencies and $\cjtatrib$ the related set of attributes.
$X \subseteq \cjtatrib$ is \textbf{pseudo-closed} if:
\begin{enumerate}
\item
$X \neq \clo(X)$, i.e., $X$ is not closed.
\item
  If $Y$ is a proper subset of $X$ ($Y \subset X$) and is pseudo-closed,
  then $\clo(Y) \subseteq X$.
\end{enumerate}
\end{definition}

\begin{definition}
The \textbf{\dgbasis} of a set of dependencies $\Sigma$ is defined as:

\[\conjunt{\ad{X}{\clo(X)} \mid X \subseteq \cjtatrib \text{ and } X \text{ pseudo-closed}}\]
\end{definition}

The \dgbasis is not direct but it is \textit{minimal} and \textit{non-redundant}.



\section{Impact of a Direct Basis on the Three Algorithms}
\label{sec:impact}

In this section we discuss the impact of ``directness'' on the three algorithms
presented in Section~\ref{sec:algos}.
By \textit{impact} we mean the possibility of \textit{adapting} 
these algorithms to process a direct basis.
We explain, for each algorithm which changes can be performed when $\Sigma$ is a \cdub or a \dbasis.
We also try to quantify the potential benefits resulting from these changes.

We prove the correctness of the adapted algorithms by showing that they all compute
$\pass(X) = X \cup \conjunt{B \mid A \subseteq X \text{ and } \ad{A}{B} \in \Sigma}$ (Def. \ref{def:pass}).
Actually, we are using the definition of $\pass(X)$ as an algorithm that
(1) accumulates the RHS of all dependencies $\ad{A}{B}$ such that
$A \subseteq X$ and,
(2) joins this accumulation to $X$.
As discussed below, the order in which each step is \textit{separately} performed is relevant.

\subsection{The \dbasis and its Unit Implications}

Since the \cdub follows Definition \ref{def:directbase} (\cite{BertetM10}),
any algorithm that computes $\pass(X)$ with a \cdub
correctly computes $\clo(X)$ for all $X \subseteq \cjtatrib$.
However, when the input is a \dbasis, this is not necessarily true, as shown in the next example:

\begin{example}
\label{example:linclosure}
Let us consider the following reduced and clarified formal context:

\begin{center}
\begin{cxt}%
\cxtName{\context}%
\att{a}%
\att{b}%
\att{c}%
\att{d}%
\obj{x.x.}{$o_1$}
\obj{..xx}{$o_2$}
\obj{x...}{$o_3$}
\obj{.x..}{$o_4$}
\end{cxt}
\end{center}

The \dbasis related to this context is
$\Sigma = \conjunt{\ad{d}{c},\ad{bc}{a},\ad{ad}{b},\ad{ab}{c},\ad{bc}{d},\ad{ab}{d}}$.
The computation of $\pass(bd)$ iterates over all dependencies in $\Sigma$, and
the only dependency $\ad{A}{B}$ for which the test $A \subseteq bd$ is positive is $\ad{d}{c}$.
Then attribute $c$ is added to the result, and the returned value is $bcd$, 
which is not the correct answer for $\clo(bd)$.
\end{example}

It may be strange that $\clo(X) = \pass(X)$ does \textit{not} hold for a \dbasis, although it is also a direct basis.
As explained in Section \ref{subsec:dbasis}, the \dbasis is direct when 
the order in the processing of the dependencies is taken into account, 
i.e., dependencies whose LHS are singletons need to be processed first.
However, there is another relevant fact: the \dbasis needs to accumulate
the RHS of the processed dependencies to the closure being computed.

Indeed, in line $6$ of the \closure algorithm, i.e.,
$X \leftarrow X \cup B$,
the RHS of the dependency $\ad{A}{B}$ is accumulated \textit{inside the loop}
into variable $X$ that contains the partial result of $\clo(X)$.
By contrast, the definition of $\pass(X)$ accumulates the final result \textit{outside the loop}, once the loop has been exhausted.
This slight difference is crucial when analyzing the behavior
of the \dbasis when using $\pass(X)$ in order to compute $\clo(X)$.
Thus the \dbasis is direct also w.r.t. to an algorithm that accumulates the closure 
inside the loop.

This problem with the \dbasis is solved by computing $\pass(\clozero(X))$ instead of $\pass(X)$, 
where $\clozero(X)$ is the closure of $X$ w.r.t. unit dependencies,
i.e., whose LHS contains one single attribute (as discussed in Section~\ref{subsec:dbasis}).

\begin{proposition}
\label{prop:clopp}
If $\Sigma$ is a \dbasis, then $\clo(X) = \pass(\clozero(X))$.
\end{proposition}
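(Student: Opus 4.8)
The plan is to prove the two inclusions separately, the inclusion $\pass(\clozero(X)) \subseteq \clo(X)$ being routine and $\clo(X) \subseteq \pass(\clozero(X))$ carrying the real content. For the first, note that $\Sigma_0 \subseteq \Sigma$, so $\clozero(X) \subseteq \clo(X)$; and since $\clo(X)$ is closed under every dependency of $\Sigma$, a single pass over $\Sigma$ starting from a subset of $\clo(X)$ cannot leave $\clo(X)$. Concretely, every attribute added by $\pass$ comes from some $\ad{A}{B} \in \Sigma$ with $A \subseteq \clozero(X) \subseteq \clo(X)$, whence $B \subseteq \clo(X)$; together with $\clozero(X) \subseteq \clo(X)$ this gives $\pass(\clozero(X)) \subseteq \clo(X)$.

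For the reverse inclusion I would take $c \in \clo(X)$ and argue $c \in \pass(\clozero(X))$. If $c \in \clozero(X)$ there is nothing to do, so assume $c \notin \clozero(X)$. The goal is then to exhibit a minimal D-generator $A^\ast \in gen_D(c)$ with $A^\ast \subseteq \clozero(X)$: such an $A^\ast$ yields $\ad{A^\ast}{c} \in \Sigma$, so the test $A^\ast \subseteq \clozero(X)$ fires during the pass and places $c$ in $\pass(\clozero(X))$. To get started I would use that the \cdub, which is left-reduced and unit, is direct (Definition~\ref{def:directbase}) and shares the closure operator $\clo$ with the \dbasis; directness produces a minimal generator $A_0$ of $c$ with $A_0 \subseteq X \subseteq \clozero(X)$. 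Moreover $c \notin \clozero(A_0)$, for otherwise $c \in \clozero(\clozero(X)) = \clozero(X)$, a contradiction. Hence the family $\mathcal{G}$ of minimal generators $A$ of $c$ with $A \subseteq \clozero(X)$ and $c \notin \clozero(A)$ is nonempty, and finite since $\cjtatrib$ is finite.

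I would then pick $A^\ast \in \mathcal{G}$ with $\clozero(A^\ast)$ minimal under inclusion and verify the two conditions of Definition~\ref{def:dbase_dgen}. The condition $c \notin \clozero(A^\ast)$ holds because $A^\ast \in \mathcal{G}$. For the second, let $A'$ be any minimal generator of $c$ with $A' \subseteq \clozero(A^\ast)$. Then $A' \subseteq \clozero(A^\ast) \subseteq \clozero(X)$ and $\clozero(A') \subseteq \clozero(A^\ast)$, so $c \notin \clozero(A')$ and therefore $A' \in \mathcal{G}$; minimality of $\clozero(A^\ast)$ forces $\clozero(A') = \clozero(A^\ast)$.

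The main obstacle is to deduce $A' = A^\ast$ from $\clozero(A') = \clozero(A^\ast)$, which is exactly where the \emph{standard} hypothesis on the context is needed; I would isolate it as a lemma stating that distinct minimal generators of $c$ have distinct $\Sigma_0$-closures. Its proof rests on two facts: since $\Sigma_0$ is unit, $\clozero(S) = \bigcup_{s \in S} \phi(s)$ with $\phi(s) = \clozero(\conjunt{s})$; and a minimal generator contains no attribute that is a $\Sigma_0$-consequence of another of its attributes, since otherwise a proper subset would already generate $c$. A short chase through $\clozero(A^\ast) = \clozero(A')$, using this $\Sigma_0$-independence of $A^\ast$, yields for each $x \in A^\ast$ some $a' \in A'$ with $\phi(x) = \phi(a')$; standardness ($\phi(a) = \phi(b)$ only if $a = b$) then gives $x = a'$, so $A^\ast \subseteq A'$ and, by symmetry, $A^\ast = A'$. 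With the lemma in hand, $A' = A^\ast$, so $A^\ast$ is a minimal D-generator, $\ad{A^\ast}{c} \in \Sigma$, and $c \in \pass(\clozero(X))$, completing the reverse inclusion and the proof.
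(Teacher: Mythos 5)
Your proof is correct, and it is organized quite differently from the paper's. The paper argues by contradiction: it supposes an attribute $c$ is missed, passes to the equivalent \cdub, and observes that the offending dependency $\ad{A}{c}$ can be absent from the \dbasis only because either $c \in \clozero(A)$ or some other minimal generator $A' \subseteq \clozero(A)$ of $c$ exists; in the second case it asserts that $\ad{A'}{c}$ lies in $\Sigma$ and fires. That last assertion is the delicate point: the negation of ``$A$ is a minimal D-generator'' only yields a minimal generator $A'$, not a priori a minimal \emph{D}-generator, so one still has to argue that some such $A'$ actually appears in $\Sigma$. Your extremal choice of $A^\ast \in \mathcal{G}$ with $\clozero(A^\ast)$ inclusion-minimal, combined with the lemma that distinct minimal generators of $c$ have distinct $\Sigma_0$-closures (which is where the \emph{standard} hypothesis $\phi(a) = \phi(b) \Rightarrow a = b$ genuinely enters), supplies exactly the descent that the paper leaves implicit, and your explicit first inclusion $\pass(\clozero(X)) \subseteq \clo(X)$ is likewise only tacit in the paper. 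What your route buys is a self-contained, checkable argument that makes visible where each hypothesis on the \dbasis is used; what it costs is length. One small point worth adding: Definition~\ref{def:dbase_dgen} requires $|A| \geq 2$, so you should note that every $A \in \mathcal{G}$ has at least two elements --- if $A = \conjunt{a}$ then $c \in \phi(a) \setminus \conjunt{a}$ would put $\ad{a}{c}$ in $\Sigma_0$ and hence $c \in \clozero(A)$, contradicting membership in $\mathcal{G}$ --- which follows immediately from your setup but is not stated.
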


\begin{proof}
We prove it by contradiction.
Suppose that we compute $\clo(X)$ as
$\pass(\clozero(X)) = \clozero(X) \cup \conjunt{B \mid A \subseteq \clozero(X) 
\text{ and } \ad{A}{B} \in \Sigma}$, 
but that the result is not correct because there are some attributes 
that are missing in the result, and let $c$ be one of these attributes.

We already know that $\clo(X) = \pass(X)$ when $\Sigma$ is a \cdub 
and that a \dbasis can be defined as a subset of a \cdub.
Let $\Sigma'$ be the equivalent \cdub to the \dbasis $\Sigma$.
If $\pass(X)$ fails to add the attribute $c$ to $\clo(X)$ 
this is because there is an dependency $\ad{A}{c} \in \Sigma'$ such that $A \subseteq X$ 
which is not in $\Sigma$.
This may be due to (see Definition~\ref{def:dbase_dgen}):

\begin{enumerate}
\item
  $c \in \clozero(A)$.
  But since we are computing the closure of $\pass(\clozero(X))$, 
  then $c \in \clozero(X)$ and attribute $c$ will appear in the result computed by $\pass(X)$.
  
\item
  There is a dependency $\ad{A'}{c}$ in $\Sigma$ such that
  $A' \subseteq \clozero(A)$ and $A' \neq A$.
  As we are computing the closure of $\clozero(X)$,
  $A' \subseteq \clozero(A) \subseteq \clozero(X)$, and
  then $\pass(X)$ will use dependency $\ad{A'}{c}$ to add $c$ to the result.
\end{enumerate}

\end{proof}

Thus, $\clo(X) = \pass(X)$ when the input is a \cdub,
and we have as well that $\clo(X) = \pass(\clozero(X))$ when the input is a \dbasis.
Moreover, it should be noticed that the three algorithms presented in Section~\ref{sec:algos} 
accumulate the result, i.e., $X \leftarrow X \cup B$, \textit{inside the loop}.
However, the modified algorithms \linclosuredir and \wilddir presented below will not.
This explains why \linclosuredir and \wilddir will need to compute first $\clozero(X)$
when $\Sigma$ is a \dbasis.

\subsection{Impact on \closure}
\label{sec:impactclosure}

When the input is a direct basis, the primary change to be performed 
in \closure is to remove the outer loop, since a direct basis ensures that only one loop is necessary.
As a consequence, it is no longer necessary to remove a dependency once it is processed (line 8) and the variable \textit{stable} is no longer needed.
A new version of \closure is given in Algorithm~\ref{alg:closuredirect}.

\algClosureDirect

It is straightforward to check that \closuredir computes $\clo(X)$ for any direct basis $\Sigma$:
the invariant of the loop in lines $1-5$ is 
$X = X \cup \conjunt{ B \mid \ad{A}{B} \text{ and } A \subseteq X }$ 
for all $\ad{A}{B}$ visited so far.
This means that $X$ will accumulate the right-hand sides of all those dependencies $\ad{A}{B}$
such that $A \subseteq X$, as required by Definition \ref{def:pass}.
At the end of this loop, we have that $X := \clo(X)$.

On the other hand, Section~11 in~\cite{AdarichevaNR13} states that 
\closuredir~\footnote{In~\cite{AdarichevaNR13} \closuredir is called \textit{Algorithm 0}.} 
correctly computes $\clo(X)$ when $\Sigma$ is a \dbasis.
In this case, we can remark that the input of \closuredir is $X$ and not
$\clozero(X)$, as the accumulation $X \leftarrow X \cup B$ is performed inside the loop.

\subsection{Impact on \linclosure}
\label{sec:impactlinclosure}

Compared to \closure, the outer loop of \linclosure does not scan per dependency but per attribute:
once the LHS of a dependency is checked as a subset of $X$, then the RHS is added to $\clo(X)$.
This means that performing just one single outer pass may not yield 
the correct computation of $\clo(X)$.
Consequently, the outer loop cannot be removed as in \closuredir, 
and the only way to reduce the number of loops and 
to guarantee the correct computation of $\clo(X)$ 
is to prevent the increase of the variable $update$.
This variable controls the number of inner loops:
for each attribute where $update$ is increased, \linclosure performs an extra inner loop.
Preventing $update$ from increasing can be achieved by removing lines 
$update \leftarrow update \cup add$ and $add \leftarrow B \setminus X$ 
in \linclosure, as shown in Algorithm~\ref{alg:linclosuredirect}.

\algLinclosureDirect

These changes return the right result when $\Sigma$ is a \cdub,
and when $\Sigma$ is a \dbasis provided that $\clo(\clozero(X))$ is computed
instead of $\clo(X)$, as proven in the next proposition.

\begin{proposition}
  \label{prop:linclosuredir}
  Given $X \subseteq \atrib$,
  if $\Sigma$ is a \cdub then \linclosuredir computes $\clo(X)$.
  If $\Sigma$ is a \dbasis then \linclosuredir with input $\clozero(X)$
  computes $\clo(X)$.
\end{proposition}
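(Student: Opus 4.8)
The plan is to reduce both statements to a single fact about the modified algorithm, namely that \linclosuredir, when its working set $update$ is initialized to a set $Y \subseteq \cjtatrib$, returns $X \cup \conjunt{B \mid \ad{A}{B} \in \Sigma \text{ and } A \subseteq Y}$; that is, the variable $add$ ends up holding exactly the union of the right-hand sides of those dependencies whose left-hand side is contained in $Y$. Once this is established, the two cases follow at once: for a \cdub the algorithm takes $Y = X$, and directness (Definition~\ref{def:directbase}) gives $\clo(X) = \pass(X)$; for a \dbasis it takes $Y = \clozero(X)$, and Proposition~\ref{prop:clopp} gives $\clo(X) = \pass(\clozero(X))$. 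So the real work is to match the behaviour of \linclosuredir to $\pass$ as defined in Definition~\ref{def:pass}.

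The key step is a counter invariant. Because \linclosuredir drops the two lines of \linclosure that feed freshly derived attributes back into $update$ (the assignments $add \gets B \setminus X$ and $update \gets update \cup add$), the set $update$ can only shrink: it starts as $Y$ and loses exactly one element per outer iteration. Hence every attribute of $Y$ is dequeued exactly once and no attribute outside $Y$ is ever dequeued. Consequently, for each dependency $\ad{A}{B}$ the counter $count[\ad{A}{B}]$, initialized to $|A| \geq 1$ (there is no empty left-hand side, since the contexts are clarified and reduced), is decremented once for every element of $A \cap Y$ and therefore reaches $0$ if and only if $A \subseteq Y$. At the unique moment a counter hits $0$ the algorithm executes $add \gets add \cup B$, so $add$ equals the accumulation $\conjunt{B \mid \ad{A}{B} \in \Sigma \text{ and } A \subseteq Y}$ of Definition~\ref{def:pass}. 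This is the one-pass guarantee we need.

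It then remains to assemble the two cases from the return value $X \cup add$. For a \cdub, $Y = X$ makes $X \cup add$ literally $\pass(X)$, which equals $\clo(X)$ by directness, so this case is immediate. For a \dbasis the return value is $X \cup add$ with $add = \conjunt{B \mid A \subseteq \clozero(X)}$, whereas Proposition~\ref{prop:clopp} writes $\clo(X)$ as $\pass(\clozero(X)) = \clozero(X) \cup add$. The two expressions agree provided $\clozero(X) \subseteq X \cup add$, and this reconciliation is the main obstacle, since $add$ is joined to $X$ (not to $\clozero(X)$) only at the return. I would settle it using the binary implications $\Sigma_0 \subseteq \Sigma$: any $c \in \clozero(X) \setminus X$ must be forced in by some $\ad{a}{c} \in \Sigma_0$ with $a \in \clozero(X)$, for otherwise $\clozero(X) \setminus \conjunt{c}$ would already be $\Sigma_0$-closed and contain $X$, contradicting the minimality of $\clozero(X)$; since $count[\ad{a}{c}] = 1$ and $a$ is dequeued from $update = \clozero(X)$, the attribute $c$ is placed in $add$. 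Thus $\clozero(X) \setminus X \subseteq add$, giving $X \cup add = \clozero(X) \cup add = \clo(X)$.

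The conceptual difficulty running through the whole argument is precisely that, unlike \closuredir, \linclosuredir accumulates the closure \emph{outside} the loop, in $add$, and joins it to $X$ only upon returning. This is exactly why a \dbasis must be started from $\clozero(X)$ rather than $X$, and why the final identification above is the delicate point rather than a routine rewriting.
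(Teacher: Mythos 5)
Your proof is correct and follows essentially the same route as the paper's: both arguments reduce the proposition to showing that, after the single outer loop, the variable $add$ holds exactly $\conjunt{B \mid \ad{A}{B} \in \Sigma \text{ and } A \subseteq Y}$, where $Y$ is the initial value of $update$, and then invoke directness of the \cdub or Proposition~\ref{prop:clopp} for the \dbasis. You are, however, more careful on two points that the paper glosses over. First, you establish the counter condition as an equivalence --- $count[\ad{A}{B}]$ reaches $0$ \emph{iff} $A \subseteq Y$, because $update$ only shrinks and each element of $Y$ is dequeued exactly once --- whereas the paper only argues the forward direction. Second, and more substantively, you explicitly reconcile the returned value $X \cup add$ with $\pass(\clozero(X)) = \clozero(X) \cup add$ in the \dbasis case, using the binary implications of $\Sigma_0$ to show $\clozero(X) \setminus X \subseteq add$; the paper's proof only treats the initialization $update = X$ and never addresses this discrepancy, so your final paragraph fills a real gap rather than restating a routine step.
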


\begin{proof}
We show that after the outer loop in lines $9-18$, 
the variable $add$ in \linclosuredir contains 
$\conjunt{B \mid \ad{A}{B} \text{ and } A \subseteq X}$.
Let us consider $\ad{A}{B}$ in $\Sigma$ such that $A \subseteq X$.
At the beginning of the outer loop, $update$ contains all attributes in $X$ (line $6$).
In lines $10$ and $11$, an attribute is picked in $update$ (i.e., $X$) and then
removed from $update$.
The outer loop in lines $9-18$ ensures that all attributes in $update$ ($X$) 
are processed one by one at each loop.
In the inner loop, lines $12-17$, \linclosuredir marks all dependencies whose
LHS contains at least one attribute in $update$ (line $13$).
Since all attributes in $X$ are processed in the outer loop and since $A \subseteq X$, 
then $count[\ad{A}{B}]$ goes necessarily down to $0$.
Therefore line $15$ is executed, i.e., the RHS of $\ad{A}{B}$ is added to $\clo(X)$.

At the end, $update$ contains
$\conjunt{B \mid A \subseteq X \text{ and } \ad{A}{B} \in \Sigma}$,
and in line $19$ \linclosuredir returns
$X \cup \bigcup \conjunt{B \mid A \subseteq X \text{ and } \ad{A}{B} \in \Sigma}$.
\end{proof}

\subsection{Impact on \wild}
\label{sec:impactwild}

Contrasting \linclosure, the \wild algorithm checks that, at each pass of the
outer loop,  all dependencies $\ad{A}{B}$ such that $A \subseteq X$ are \textit{directly} processed.
This means that the containment test line $5$ in \closure and the test line $13$ 
over $count$ in \linclosure are no more necessary.

As previously, it should be ensured that \wild only performs one single pass of the outer loop.
Actually, the outer loop in \wild is equivalent to the outer loop in \closure, 
making things easier.
Thus, in \wilddir, it is only needed to remove the outer loop and consequently the 
variable $stable$ is no longer needed.

\algWildClosureDirect

\begin{proposition}
\label{prop:wilddir}
Given $X \subseteq \atrib$,
if $\Sigma$ is a \cdub then \wilddir computes $\clo(X)$.
If $\Sigma$ is a \dbasis then \wilddir with input $\clozero(X)$
computes $\clo(\clozero(X))$.
\end{proposition}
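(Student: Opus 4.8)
The plan is to reduce the correctness of \wilddir to the two identities already available: $\clo(X) = \pass(X)$ when $\Sigma$ is a \cdub (directness, Definition~\ref{def:directbase}) and $\clo(X) = \pass(\clozero(X))$ when $\Sigma$ is a \dbasis (Proposition~\ref{prop:clopp}). Concretely, I would show that \wilddir performs exactly one pass over its (possibly adjusted) input, so that its output is $\pass(X_0)$, where $X_0$ denotes the value of $X$ after the initial adjustment---that is, $X_0 = X$ in the \cdub case and $X_0 = \clozero(X)$ in the \dbasis case.

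First I would pin down the set of dependencies that drives the inner loop. The preparation stage guarantees $list[m] = \conjunt{\ad{A}{B} \in \Sigma \mid m \in A}$, so unfolding $\Sigma_1 = \bigcup_{m \in \cjtatrib \setminus X_0} list[m]$ gives $\Sigma_1 = \conjunt{\ad{A}{B} \in \Sigma \mid A \cap (\cjtatrib \setminus X_0) \neq \emptyset} = \conjunt{\ad{A}{B} \in \Sigma \mid A \not\subseteq X_0}$. Taking complements, $\Sigma \setminus \Sigma_1 = \conjunt{\ad{A}{B} \in \Sigma \mid A \subseteq X_0}$, i.e. precisely the dependencies whose LHS fires on $X_0$.

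Next I would argue that the inner loop realizes $\pass(X_0)$. Since the loop ranges over $\Sigma \setminus \Sigma_1$ and executes $X \gets X \cup B$ for each such dependency, upon termination $X = X_0 \cup \conjunt{B \mid \ad{A}{B} \in \Sigma \text{ and } A \subseteq X_0} = \pass(X_0)$ by Definition~\ref{def:pass}. The point I would emphasize---and the only genuinely subtle one---is that even though the accumulation $X \gets X \cup B$ happens \emph{inside} the loop, the set $\Sigma \setminus \Sigma_1$ of dependencies to be processed is frozen \emph{before} the loop and depends only on $X_0$. Hence the growth of $X$ during the loop never triggers reconsideration of any dependency, and \wilddir faithfully implements the ``accumulate outside the loop'' semantics of $\pass$ rather than the in-loop behaviour of the original \wild. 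This decoupling is exactly what makes a single pass sufficient and what sidesteps the \dbasis difficulty illustrated in Example~\ref{example:linclosure}.

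Finally I would settle the two cases. If $\Sigma$ is a \cdub then $X_0 = X$, so \wilddir returns $\pass(X) = \clo(X)$ by directness. If $\Sigma$ is a \dbasis and the input is $\clozero(X)$, then idempotence of the closure operator $\clozero$ gives $X_0 = \clozero(\clozero(X)) = \clozero(X)$, so \wilddir returns $\pass(\clozero(X))$; instantiating Proposition~\ref{prop:clopp} at the argument $\clozero(X)$ and using idempotence once more yields $\clo(\clozero(X)) = \pass(\clozero(\clozero(X))) = \pass(\clozero(X))$, which is exactly the claimed output. The main obstacle is not a computation but this conceptual step: making rigorous that freezing $\Sigma \setminus \Sigma_1$ before the loop separates the in-loop accumulation from the membership test, which is what guarantees correctness in one pass.
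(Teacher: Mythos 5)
Your proposal is correct and follows essentially the same route as the paper's proof: both argue that freezing $\Sigma_1$ (and hence $\Sigma\setminus\Sigma_1$, the dependencies with $A\subseteq X_0$) before the inner loop makes the algorithm compute exactly one pass $\pass(X_0)$, after which directness (for the \cdub) or Proposition~\ref{prop:clopp} (for the \dbasis) finishes the argument. Your write-up is somewhat more explicit than the paper's---notably the set-complement computation for $\Sigma\setminus\Sigma_1$ and the idempotence step $\clozero(\clozero(X))=\clozero(X)$---and it folds into the proof the ``frozen dependency set'' observation that the paper only makes in a remark afterwards, but the underlying argument is the same.
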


\begin{proof}
The key line of \wilddir algorithm is line $8$, where are selected all dependencies 
whose LHS contains an attribute not present in $X$.
Actually, if $a \in A$ in $\ad{A}{B}$ is such that $a \notin X$, 
then it is impossible that $A \subseteq X$.
Therefore, line $8$ of \wilddir ensures that all the dependencies used in the 
inner loop in lines $9-11$ are such that
$\conjunt{B \mid A \subseteq X \text{ and } \ad{A}{B} \in \Sigma}$.
Consequently, the RHS of these dependencies are added to $X$ in line $10$
and thus \wilddir algorithm computes 
$X \cup \bigcup \conjunt{B \mid A \subseteq X \text{ and } \ad{A}{B} \in \Sigma}$.
\end{proof}

We would like to mention that although \wilddir \textit{accumulates} the RHS in $X$, 
this does \textit{not} imply that its behavior is equivalent to \closuredir. 
The crucial fact here is that the set of dependencies over which 
the inner loop iterates \textit{does not change}.
For example, suppose that in line $10$ we increase $X$ with an arbitrary attribute 
$X := X \cup \conjunt{a}$ and that there is a dependency $\ad{X \cup \conjunt{a}}{B} \in \Sigma$.
The latter cannot be a candidate in the loop in lines $9-11$ because it contains $a \notin X$.
Now, one could think that because $a$ is added to $X$ 
this would mean that the dependency could be eventually selected.
In fact, in \closuredir it could be the case, while not in \wilddir, 
because the set of dependencies over which this algorithm iterates is already chosen 
in line $8$ and is not modified anymore.
This is why \wilddir performs the computation of $\clo(X)$ in the same way as
$\pass(X)$.
Stated differently, the RHS could be accumulated in a variable $result$ in line 10 and $X \cup result$ returned in line 12, mimicking Definition \ref{def:pass}.

\subsection{Complexity}
\label{sec:impactcomplexity}

We evaluate the actual improvement achieved  by \ref{alg:closuredirect}
w.r.t. \ref{alg:closure} when the input is a direct base such as a \cdub or a 
\dbasis.
In such a case, \closure computes $\clo(X)$ in the first loop,
and then, it loops over all the remaining dependencies in $\Sigma$,
i.e., those which are not removed in line $8$.
Since $\Sigma$ is direct, in the second loop $A \subseteq X$ is always tested
\textit{false}, and at the end of the second inner loop the function finishes.
Therefore, in the worst case scenario, \ref{alg:closuredirect}
saves a second loop of cost $\bigO(|\Sigma|)$.

In \linclosuredir, the number of iterations is reduced by preventing the variable 
$update$ to be increased.
As stated in Proposition~\ref{prop:linclosuredir}, 
the dependencies $\ad{A}{B}$ to be considered are such that $A \subseteq X$.
Regarding the remaining dependencies to be processed, the complexity of
\linclosure is of order $\bigO(|\Sigma|)$ (see~\S~\ref{sec:linclosure}), 
i.e., in the worst case, the whole set of dependencies could be iterated.

Comparing \linclosure and \linclosuredir, it should be noticed that in 
\linclosuredir $update$ and $X$ are never increased by the addition of new 
attributes, contrasting the case of \linclosure and leading to two collateral 
effects:

\begin{enumerate}
\item
  The fact that the variable $update$ (that is initialized with $X$)
  is never increased implies that the test in line $14$
  of \linclosuredir is less likely to test positive than the equivalent
  test in line $13$ in \linclosure, because in the
  latter case the variable $X$ may grow.
  This means that the number of processed dependencies as well as the
  number of attribute operations for the same closure may be higher
  in \linclosure than in \linclosuredir.
\item
  Since the variable $update$ is only decreased, it means that the number
  of loops is likely to be less in \linclosuredir than in \linclosure.
\end{enumerate}

We only have the guarantee that the number of processed dependencies and iterations
in \linclosuredir is less than in \linclosure, 
but we can only experimentally quantify the collateral effect mentioned above.

Comparing \wild and \wilddir, we can follow the discussion about \linclosure.
The complexity of \wild (\S~\ref{sec:wildclosure}) is of order $\bigO(\Sigma)$, 
i.e., in the worst case it may iterate all the dependencies in $\Sigma$.
By contrast, \wilddir will only iterate over the dependencies $\ad{A}{B}$ such that $A \subseteq X$.





\def\compCANONICALreal{ 
\begin{figure} 
	\centering
	\includegraphics[width=1.0\textwidth]{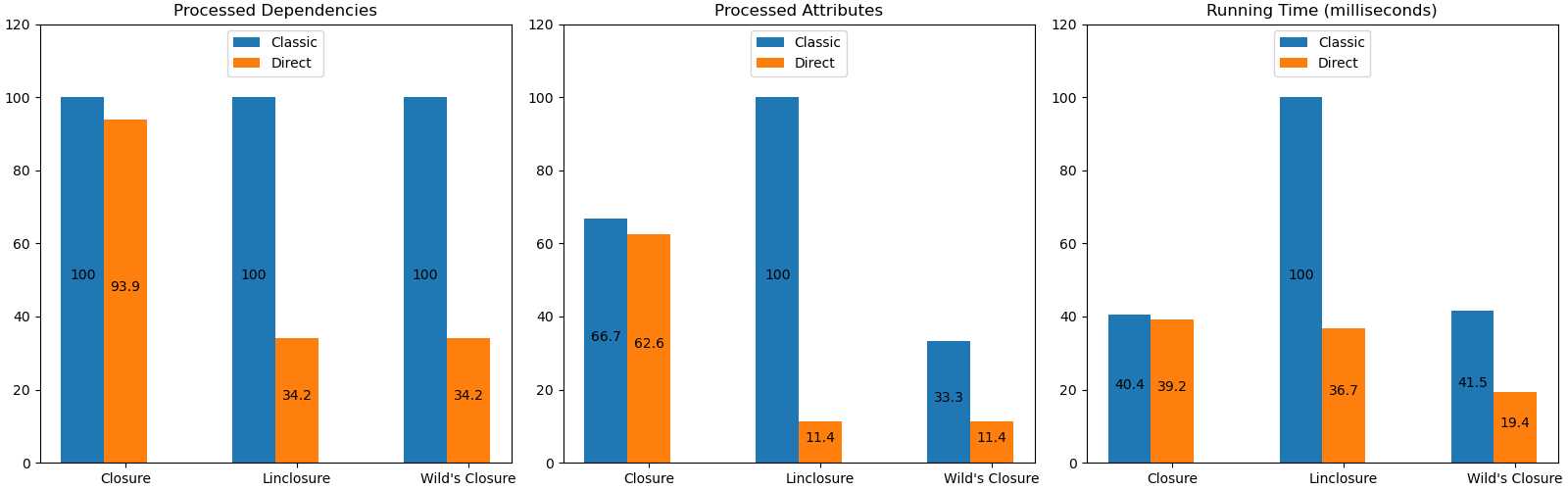}
	\caption{%
    Comparison of the performance of each algorithm w.r.t. their direct versions when processing the \cdub in \textbf{\real datasets}.
    The values have been normalized to the interval (0,100).}
	\label{fig:comparacio_real_CanonicalDirectUnitBasis}
\end{figure}
} 

\def\compDBASISreal{ 
\begin{figure} 
	\centering
	\includegraphics[width=1.0\textwidth]{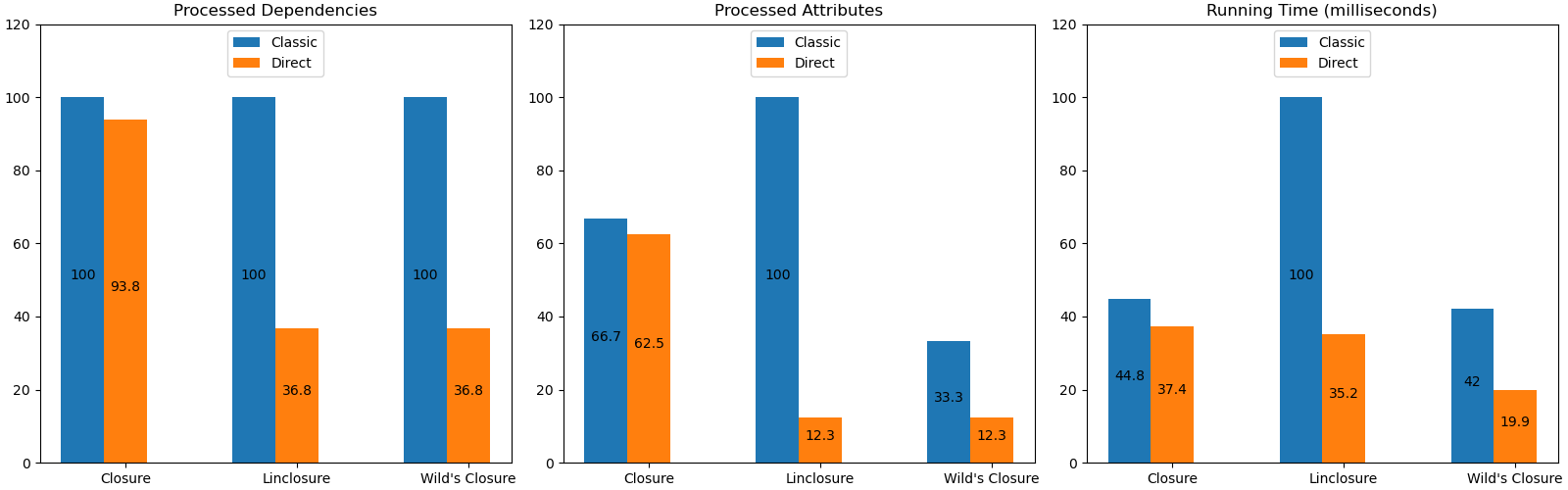}
	\caption{%
    Comparison of the performance of each algorithm w.r.t. their direct versions when processing the \dbasis in \textbf{\real datasets}.
    The values have been normalized to the interval (0,100).}
	\label{fig:comparacio_real_DBasis}
\end{figure}
} 

\def\barresreal{ 
\begin{figure} 
	\centering
	\includegraphics[width=1.0\textwidth]{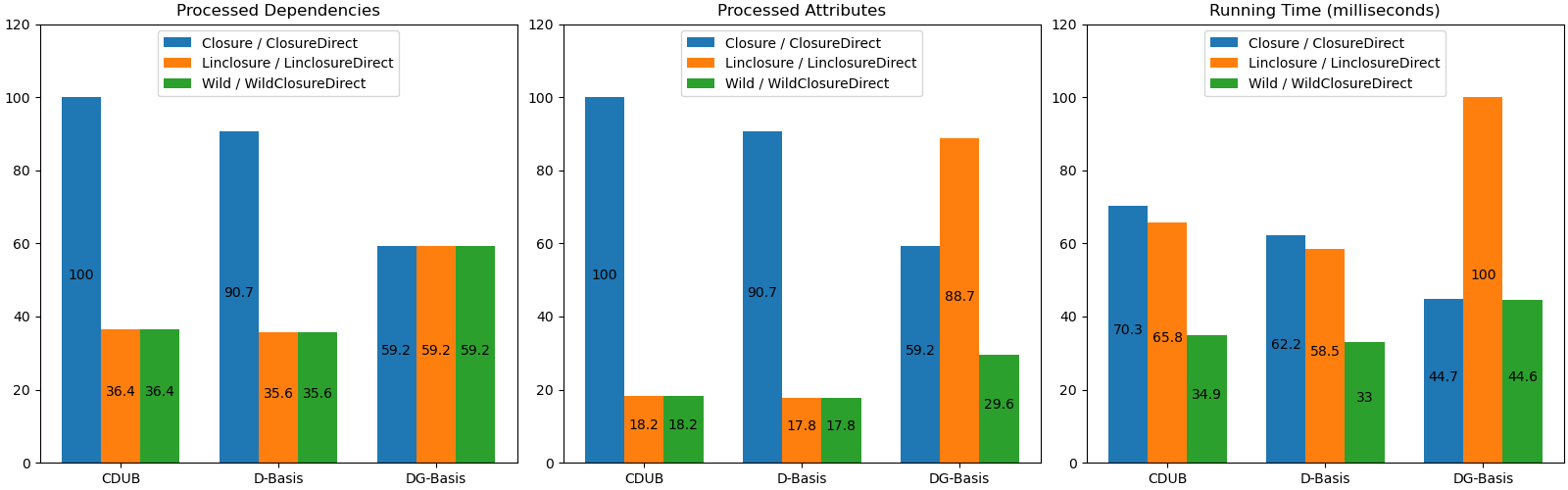}
	\caption{%
    Totals for the analyzed measures for each combination (Base $\times$ Algorithm) in \textbf{\real datasets}.
    The \cdub and the \dbasis are processed with the \textit{direct} versions
    of the algorithms. 
The values have been normalized to the interval (0,100).}
	\label{fig:barres_realdeps-atrib-in-out}
\end{figure}
} 

\def\barresNoDirectreal{ 
\begin{figure} 
	\centering
	\includegraphics[width=1.0\textwidth]{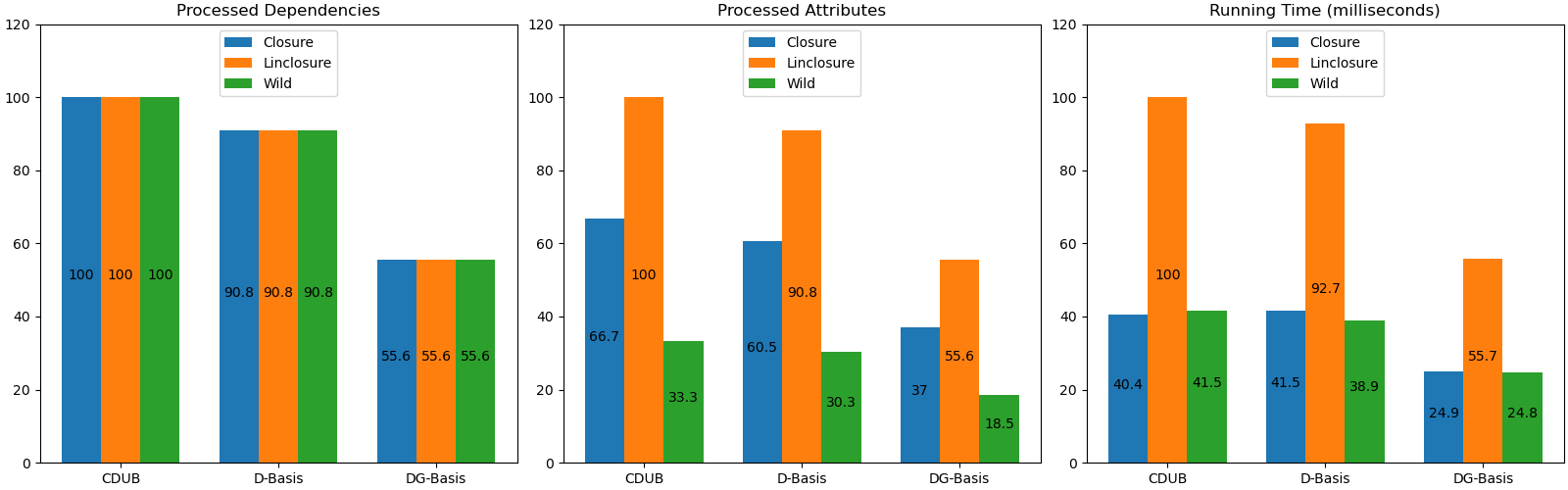}
	\caption{%
    Totals for the analyzed measures for each combination (Base $\times$ Algorithm) in \textbf{\real datasets}.
    All bases are processed with the \textit{classical} versions of the algorithms. 
    The values have been normalized to the interval (0,100).}
	\label{fig:barres_realdeps-atrib-no-direct}
\end{figure}
} 

\def\liniesrealatribs{ 
\begin{figure} 
	\centering
	\includegraphics[width=1.0\textwidth]{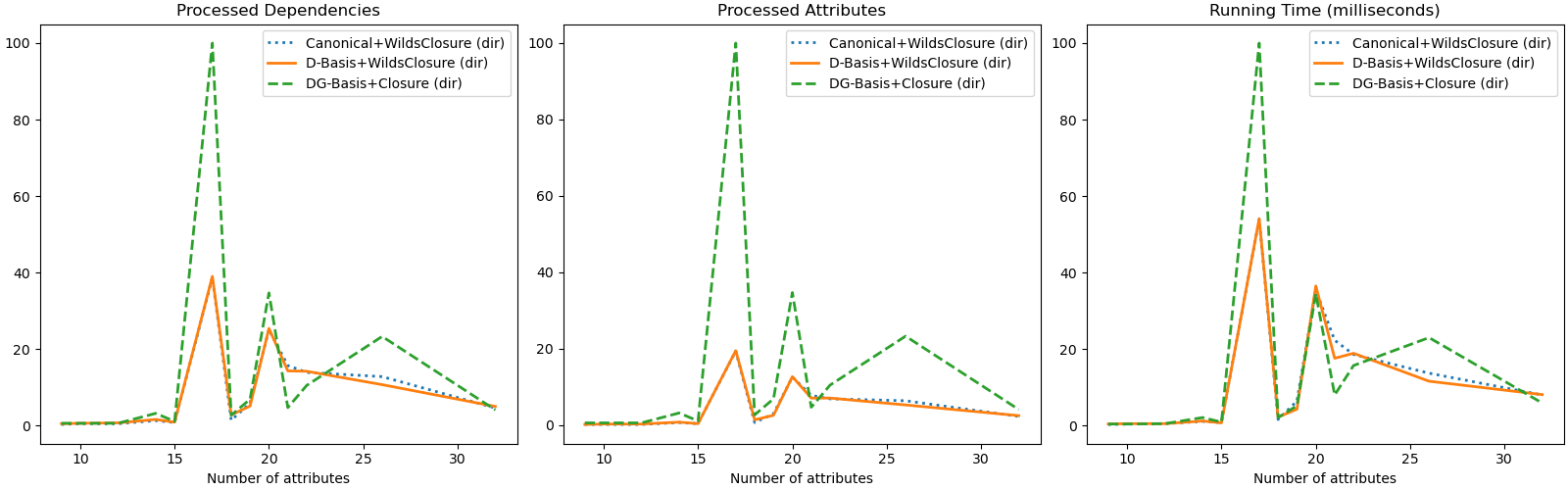}
	\caption{%
    Performance of the best combinations of (Base $\times$ Algorithm) for the analyzed metrics w.r.t. the number of attributes in \textbf{\real datasets}. 
The values have been normalized to the interval (0,100).}
	\label{fig:linies_real_liniesrealatribs}
\end{figure}
} 

\def\taulareal{ 
\begin{table} 
\centering 
\resizebox{0.9\textwidth}{!}{ 
\begin{tabular}{|l||c|c|c||c|c|c||c|c|c|} 
\hline 
 & \multicolumn{3}{|c|}{\cdub} & \multicolumn{3}{|c|}{\dbasis} & \multicolumn{3}{|c|}{DG-basis} \\\hline 
 Attribute &  CLO & LIN & WILD & CLO & LIN & WILD & CLO & LIN & WILD \\ 
\hline 
\hline 
Processed Dependencies & 0 & 10 & 10 & 0 & 4 & 4 & 5 & 5 & 5  \\ 
\hline 
Processed Attributes & 0 & 10 & 10 & 0 & 4 & 4 & 0 & 0 & 5  \\ 
\hline 
Running Time (milliseconds) & 0 & 0 & 7 & 1 & 0 & 5 & 6 & 0 & 0  \\ 
\hline 
\hline 
\end{tabular} 
} 
\caption{%
Best performance in the \textbf{\real datasets} for each pair base+algorithm for the 5 metrics.
The total number of databases is 19 (for each metric there can be more than one minimal combination).} 
\label{table:ranking_real} 
\end{table} 
} 

\def\comparacioCDUBDGBasisreal{ 
\begin{figure} 
	\centering
	\includegraphics[width=0.7\textwidth]{comparacio_CDUB_DGBasis_real.png}
	\caption{%
		X-axis: proportion $\sfrac{|\cdub|}{|DGBasis|}$. 
		Y-axis: proportion of combinations (Base $\times$ Algorithm) 
		that have the best execution time in \textbf{\real datasets}. 
		All values have been normalized to the interval (0,100).
	}
	\label{fig:comparacioCDUBDGBasisreal}
\end{figure}
} 

\def\comparacioCDUBDBasereal{ 
\begin{figure} 
	\centering
	\includegraphics[width=0.7\textwidth]{comparacio_CDUB_DBase_real.png}
	\caption{%
		X-axis: proportion $\sfrac{|\cdub|}{|\dbasis|}$. 
		Y-axis: proportion of combinations (Base $\times$ Algorithm) 
		that have the best execution time in \textbf{\real datasets}. 
		All values have been normalized to the interval (0,100).
	}
	\label{fig:comparacioCDUBDBasereal}
\end{figure}
} 





\def\compCANONICALsynthetic{ 
\begin{figure} 
	\centering
	\includegraphics[width=1.0\textwidth]{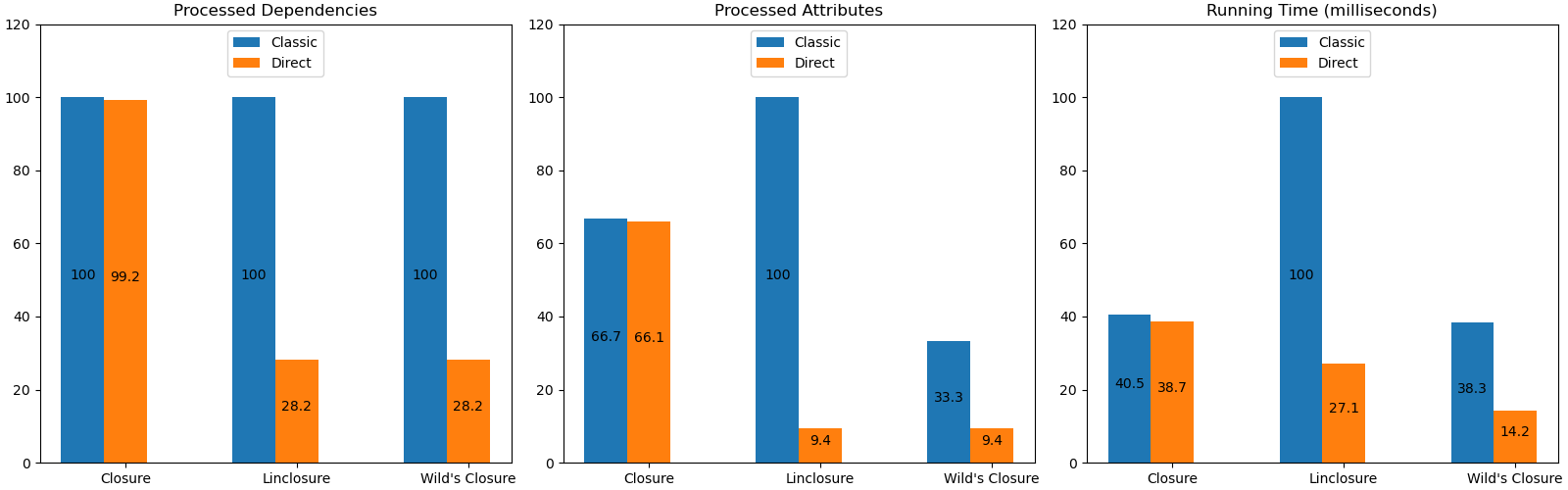}
	\caption{%
    Comparison of the performance of each algorithm w.r.t. their direct versions when processing the \cdub in \textbf{\sintetic datasets}.
    The values have been normalized to the interval (0,100).
    }
	\label{fig:comparacio_synthetic_CanonicalDirectUnitBasis}
\end{figure}
} 

\def\compDBASISsynthetic{ 
\begin{figure} 
	\centering
	\includegraphics[width=1.0\textwidth]{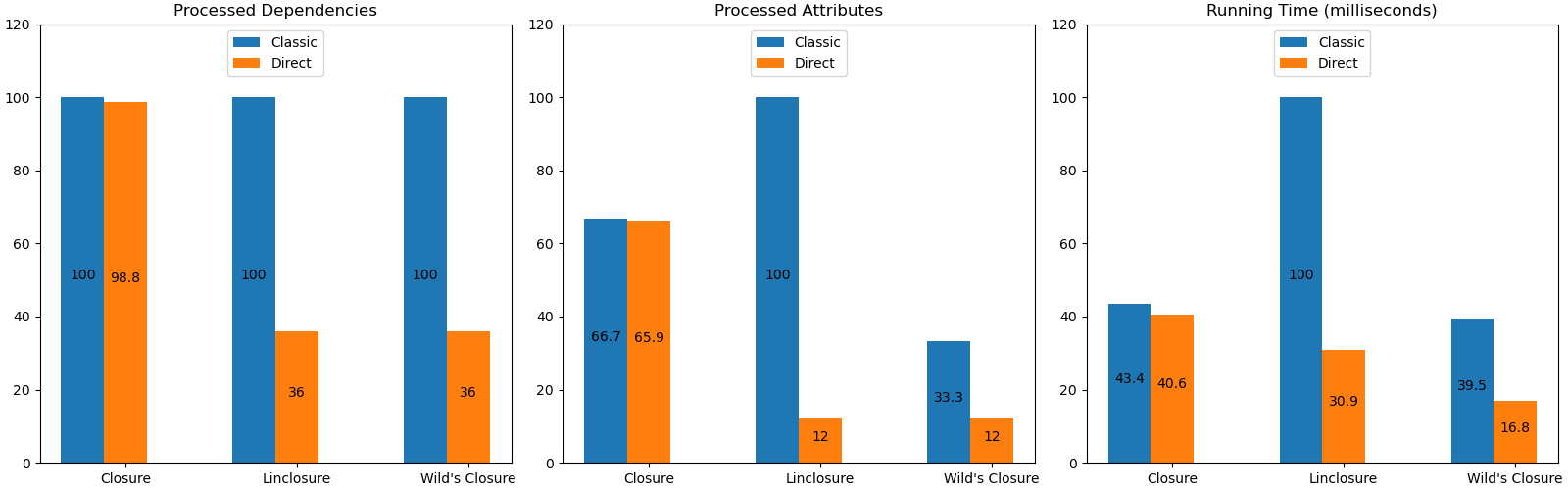}
	\caption{%
    Comparison of the performance of each algorithm w.r.t. their direct versions when processing the \dbasis in \textbf{\sintetic datasets}.
    The values have been normalized to the interval (0,100).
    }
	\label{fig:comparacio_synthetic_DBasis}
\end{figure}
} 

\def\barressynthetic{ 
\begin{figure} 
	\centering
	\includegraphics[width=1.0\textwidth]{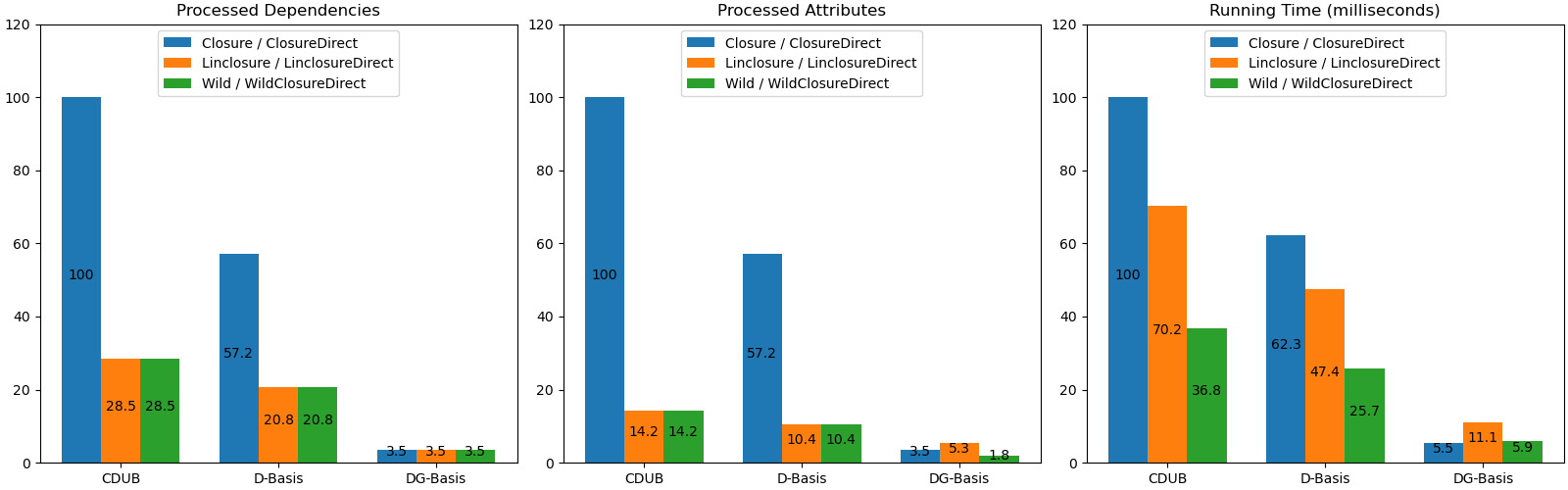}
	\caption{%
    Totals for the analyzed measures for each combination (Base $\times$ Algorithm) in \textbf{\sintetic datasets}.
    The \cdub and the \dbasis are processed with the \textit{direct} versions of the algorithms.
    The values have been normalized to the interval (0,100).
    }
	\label{fig:barres_syntheticdeps-atrib-in-out}
\end{figure}
} 

\def\barresNoDirectsynthetic{ 
\begin{figure} 
	\centering
	\includegraphics[width=1.0\textwidth]{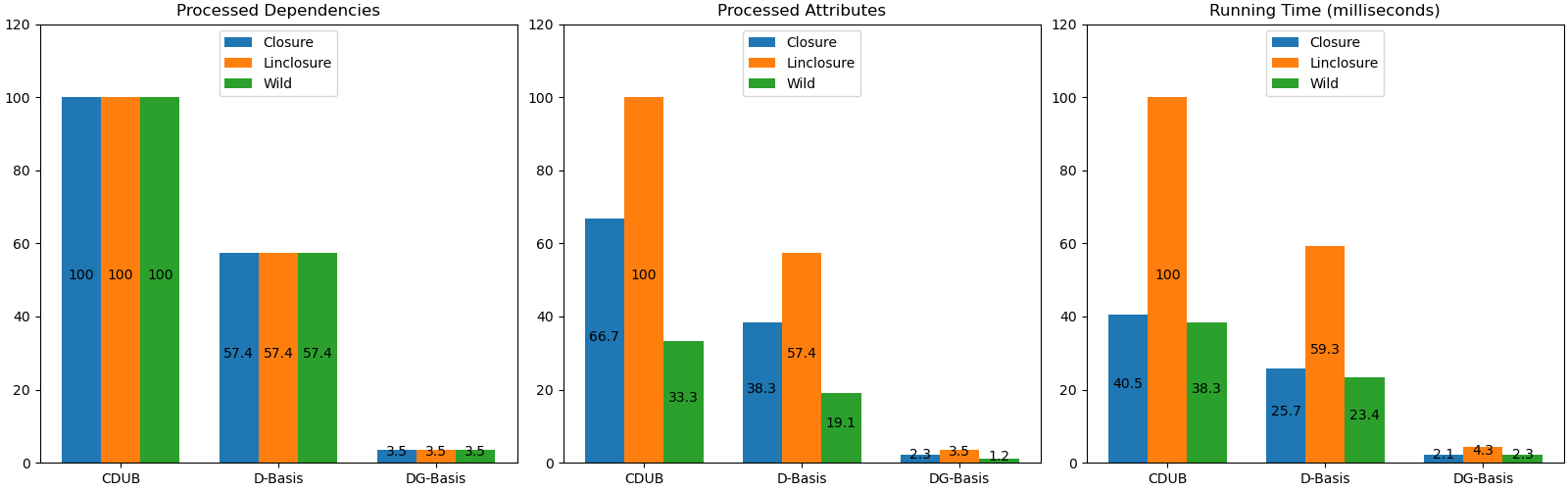}
	\caption{%
    Totals for the analyzed measures for each combination (Base $\times$ Algorithm) in \textbf{\sintetic datasets}.
    All bases are processed with the \textit{classical} versions of the algorithms.
    The values have been normalized to the interval (0,100).
    }
	\label{fig:barres_syntheticdeps-atrib-no-direct}
\end{figure}
} 

\def\liniessyntheticatribs{ 
\begin{figure} 
	\centering
	\includegraphics[width=1.0\textwidth]{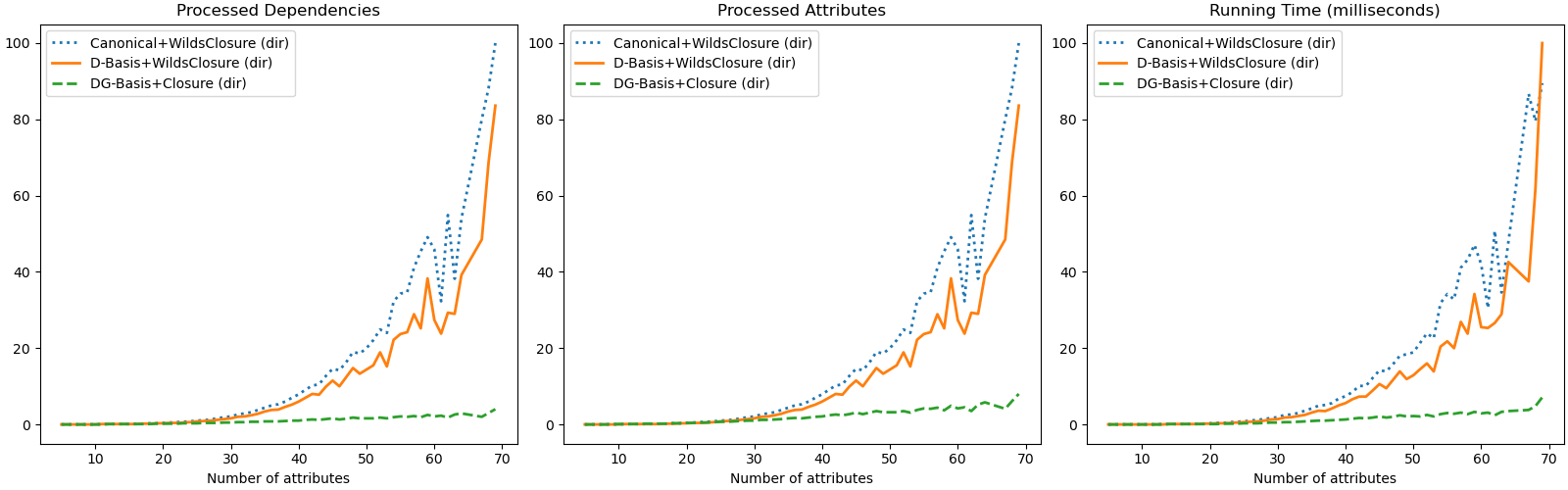}
	\caption{%
    Performance of the best combinations of (Base $\times$ Algorithm) for the analyzed metrics w.r.t. the number of attributes in \textbf{\sintetic datasets}.
    The values have been normalized to the interval (0,100).
    }
	\label{fig:linies_synthetic_liniessyntheticatribs}
\end{figure}
} 

\def\taulasynthetic{ 
\begin{table} 
\centering 
\resizebox{0.9\textwidth}{!}{ 
\begin{tabular}{|l||c|c|c||c|c|c||c|c|c|} 
\hline 
 & \multicolumn{3}{|c|}{\cdub} & \multicolumn{3}{|c|}{\dbasis} & \multicolumn{3}{|c|}{DG-basis} \\\hline 
 Attribute &  CLO & LIN & WILD & CLO & LIN & WILD & CLO & LIN & WILD \\ 
\hline 
\hline 
Processed Dependencies & 0 & 317 & 317 & 232 & 121 & 121 & 4879 & 4879 & 4879  \\ 
\hline 
Processed Attributes & 0 & 361 & 361 & 0 & 187 & 187 & 0 & 0 & 5001  \\ 
\hline 
Running Time (milliseconds) & 0 & 0 & 41 & 187 & 0 & 1044 & 4011 & 0 & 266  \\ 
\hline 
\hline 
\end{tabular} 
} 
\caption{%
Best performance in the \textbf{\sintetic datasets} for each pair base+algorithm for the 5 metrics.
The total number of databases is 5549 (for each metric there can be more than one minimal combination).
} 
\label{table:ranking_synthetic} 
\end{table} 
} 

\def\comparacioCDUBDGBasissynthetic{ 
\begin{figure} 
	\centering
	\includegraphics[width=0.7\textwidth]{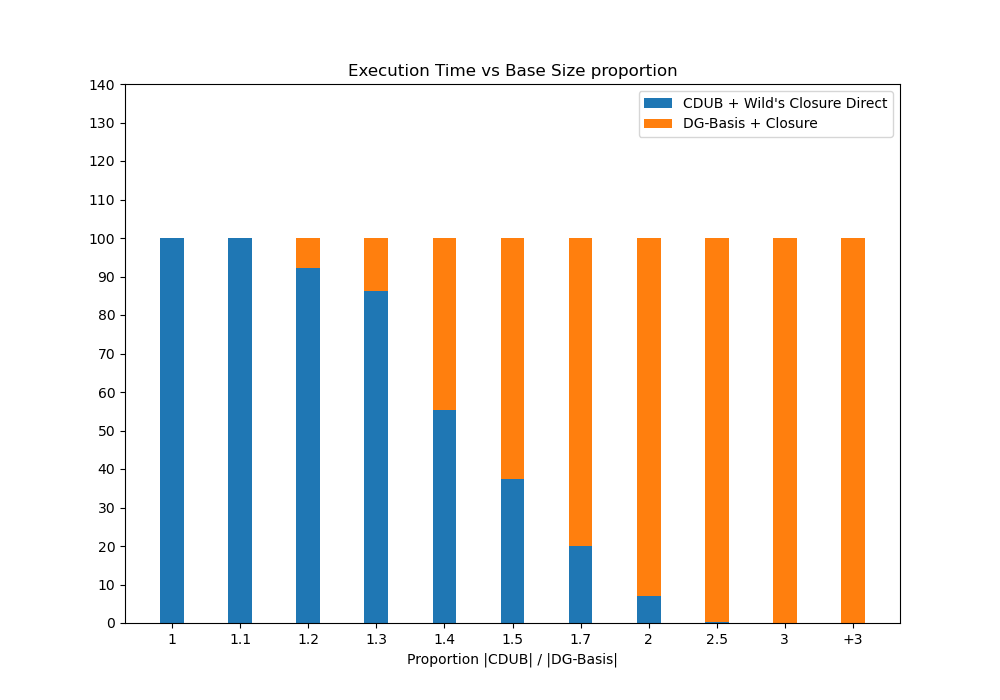}
	\caption{%
		X-axis: proportion $\sfrac{|\cdub|}{|DGBasis|}$. 
		Y-axis: proportion of combinations (Base $\times$ Algorithm) 
		that have the best execution time in \textbf{\sintetic datasets}. 
		All values have been normalized to the interval (0,100).
	}
	\label{fig:comparacioCDUBDGBasissynthetic}
\end{figure}
} 

\def\comparacioCDUBDBasesynthetic{ 
\begin{figure} 
	\centering
	\includegraphics[width=0.7\textwidth]{comparacio_CDUB_DBase_synthetic.png}
	\caption{
		X-axis: proportion $\sfrac{|\cdub|}{|\dbsis|}$. 
		Y-axis: proportion of combinations (Base $\times$ Algorithm) 
		that have the best execution time in \textbf{\sintetic datasets}. 
		All values have been normalized to the interval (0,100).
	}
	\label{fig:comparacioCDUBDBasesynthetic}
\end{figure}
} 



%
\section{Experiments}
\label{sec:experiments}
\subsection{Goals of the Present Experiments}
\label{sec:experiments_goals}

We take as a departure point and generalize the experiments performed in
both~\cite{BazhanovO14} and~\cite{AdarichevaNR13} as follows:

\begin{enumerate}

\item
The main question that we address is whether a \dgbasis can be competitive
--in terms of efficiency-- 
to compute $\clo$ compared to a direct basis, \cdub or \dbasis.

\item

We ensure that the algorithms process a direct basis to its full capacity.
Accordingly, we compare the performance of the three algorithms
and their \textit{direct} versions.

\item
We compare the performance of all possible combinations of the three algorithms
with the three bases
to ``control w.r.t. the algorithm'',
i.e., to prevent the fact that one algorithm may bias the results 
because it is more performing with a specific basis.

\item
We analyze different metrics, such as 
execution time, number of processed dependencies,
and number of attribute operations.

\item
We perform experiments over real and synthetic datasets. 

\end{enumerate}

\subsection{Experiments: Previous Work}
\label{sec:previousexperiments}

The first set of experiments in \cite{BazhanovO14} compares the performance 
of \closure, \linclosure, and \wild in the computation of $\clo$ with the \dgbasis.
The results show that 
``\textit{Algorithm 1 [Closure] was the fastest and Algorithm 2 [LinClosure] was the slowest}'',
which could be explained by the cost of the initialization step in \linclosure.
\wild ranks between \closure and \linclosure when applied to both synthetic and real datasets.

In another set of experiments, the 
authors fix a given number of dependencies ($1000$) and compute $\clo(X)$ with random $X$,
where the size of the set of attributes varies from $5,000$ to $100,000$.
In this case again, the execution time of \closure remains practically constant w.r.t. an increasing number of attributes, whereas the time grows linearly in both \linclosure and \wild.
The authors argue that 
``\textit{[T]he reason is that Algorithm 1 [\closure]
is quadratic in the number of implications, which is constant in this experiment}''.

In~\cite{AdarichevaNR13}, authors perform two types of experiments.
The first one consists in testing the performance of
\closure,
Forward Chaining Algorithm
--an algorithm used in Logics to check the satisfiability of Horn formulas \cite{DowlingG84}--,
and \wild.
They generate different D-bases including $5$ to $8$ attributes, 
and compare the \textit{execution time} of each algorithm.
It appears that \closure outperformed \wild in all these tests
with a small number of attributes.
However, the results also show that the difference in performance between both algorithms decreases when the number of attributes increases.
An important remark is that authors ensure that \closure performs only one single pass over $\Sigma$.
In another experiment authors generate different random closure systems, compute the \dgbasis and the \dbasis, and then compare the performance of both bases when computing $\clo$ using \closure.
The results show that the \dbasis checks fewer dependencies than the \dgbasis
on the average in experiments where the number of attributes is
$6$ and $7$.

\subsection{Datasets}
\label{sec:materials}

\begin{table}[ht]
\centering
\resizebox{1\textwidth}{!}{
\begin{tabular}{|l||c|c|c|c|c||l||c|c|c|c|c|}
\hline
Dataset & $|G|$ & $|M|$ & $|\Sigma_{\cdub}|$ & $|\Sigma_{\dbasis}|$ & $|\Sigma_{DG}|$ & 
Dataset & $|G|$ & $|M|$ & $|\Sigma_{\cdub}|$ & $|\Sigma_{\dbasis}|$ & $|\Sigma_{DG}|$ \\
\hline \hline
abalone & 64 & 9  & 54 &  54 & 40 & glass & 36 & 10 & 43 & 43 & 32 \\ 
\hline 
page-blocks & 27 & 11 & 80 & 80 & 34 & flights 20 500k & 23 & 12 & 42 & 30  & 28 \\ 
\hline 
atom sites & 41 & 12  & 79 & 79 & 55 & bridges & 63 & 12  & 87 & 87 & 62 \\ 
\hline 
wine & 76 & 14  & 159 & 159 & 131 & tax & 104 & 15  & 126 & 108 & 82 \\ 
\hline 
pen-recognition & 4088 & 17  & 9313 & 9313 & 6418 & ncvoter-1m-19 & 295 & 18  & 297 &  297 &  246 \\
\hline 
zoo & 80 & 18  & 249 & 248 & 143 & flight 1k 30c & 288 & 19  & 923 & 615 & 386 \\ 
\hline 
hepatitis & 608 & 20 & 5372 &  5372 &  2191 & soybean-small & 190 & 21  & 3157 & 2542 &  411 \\ 
\hline 
mushroom & 151 & 22 & 3067 & 3067 & 1508 & fd-reduced-1k-30 & 26 & 26  & 322 & 235 & 235  \\ 
\hline
fd-reduced-250k-30 & 313 & 26  & 2439 & 1545 & 1545 & automobile & 305 & 26 & 2676 & 2582 &  1350 \\ 
\hline 
anneal & 307 & 32  & 1265 & 1262 & 791 &  &  &  &  &  &  \\
\hline
\end{tabular}
}
\caption{%
The datasets \real from UCI Repository.
$|G|$ is the number of objects and $|M|$ is the number of attributes of
the related reduced and clarified contexts.
$|\Sigma_{\cdub}|$ is the size of the \cdub,
$|\Sigma_{\dbasis}|$ the size of the \dbasis,
and finally $|\Sigma_{DG}|$ the size of the \dgbasis.}
\label{table:realdatasets}
\end{table}

\textbf{Real datasets (\real)}.
We analyzed $19$ datasets from the 
\textit{UC Irvine Machine Learning Repository}\footnote{%
https://archive.ics.uci.edu/}.
Every dataset is transformed as a reduced and clarified context, possibly leading to a reduction of the initial number of attributes, and displayed in Table~\ref{table:realdatasets}.
For a given dataset, at least two bases have different sizes, meaning that
either
(a) the size of the \cdub is greater than the size of the \dgbasis,
or (b) the size of the \cdub is greater that the size of the \dbasis,
or (c) the size of the \dbasis is greater than the size of the \dgbasis,
or a combination of the above.

\textbf{Synthetic Datasets (\sintetic)}.
Reduced and clarified synthetic contexts were built in combining all possible values 
of the following parameters:
\begin{itemize}
   \item range of number of objects: 4 - 33.
    \item range of the number of attributes: 5 - 69.
    \item average number of files for each combination: 3.
    \item total number of files: 5550.
\end{itemize}

\subsection{Methodology}
\label{sec:methods}

\begin{table}[ht]
\centering
\resizebox{1.0\textwidth}{!}{
\begin{tabular}{|l||l|l|l|}
\hline
\cdub             & \closuredir & \linclosuredir & \wilddir \\ 
\hline
\dbasis           & \closuredir & \linclosuredir & \wilddir \\ 
\hline
\dgbasis          & \closure    & \linclosure    & \wild    \\  
\hline
\end{tabular}
}
\caption{Algorithms used to compute $\clo$ for each basis.}
\label{table:basis_algo_combinations}
\end{table}

We implemented the FDep algorithm~\cite{flachsavnik99} to compute the \cdub
and we implemented the \textit{Canonical Basis} algorithm~\cite{GanterO16}
to compute the \dgbasis,
using C++ and the \verb|dynamic_bitset| class of the \verb|boost| libraries.
The \dbasis is computed with the \verb|npar/dbasis|
algorithm\footnote{\url{https://gitlab.com/npar/dbasis}}. 
The algorithms computing $\clo$ for every basis are given in 
Table~\ref{table:basis_algo_combinations}.
We added the following counters to all algorithms:

\begin{enumerate}

\item
    \textbf{deps} counts the number of times a dependency
    is \textit{processed},
    i.e., used to compute $\clo(X)$,
    indicated by \texttt{deps} in every algorithm.

\item
    \textbf{attributes} counts the number of attribute operations,
    i.e., the number of times union, intersection, and
    difference are performed over sets of attributes.

\item
    \textbf{time} counts the number of milliseconds spent for computing
    $\clo(X)$.
    It should be noticed that \textbf{time} only counts
    the milliseconds strictly used for computing $\clo$ each time
    this function is called.
    
\end{enumerate}

Two additional counters are also examined
(see Appendices~\ref{app:realdatasets} and~\ref{app:syntheticdatasets}).

\begin{enumerate}

\item 
    \textbf{inner loop} counts the number of times
    an algorithm iterates in the inner loop, indicated by
    \texttt{inner loop} in the algorithm.

\item 
    \textbf{outer loop} counts the number of times
    an algorithm iterates in the outer loop (if available),
    indicated by \texttt{outer loop} in the algorithm.
   
\end{enumerate}

For each dataset in \real and \sintetic, we compute the closure
of $50,000$ random sets of attributes for every combination
\textit{\combinacio} in Table~\ref{table:basis_algo_combinations},
i.e., 9 combinations.
To ensure the precision of the execution time, 
each computation is repeated $100$ times and then the average is considered.

In \linclosure and \wild, and in their direct versions,
we do not count the part remaining constant,
i.e., the initialization of the variable $list$,
and we assume that $\Sigma$ does not change during a processing.
In \linclosure and \linclosuredir, we count the update of the variable
$count$ that needs to be recomputed each time $\clo$ is called.
The computation of $\clozero$, being necessary when $\Sigma$ is a \dbasis
in \linclosuredir and \wilddir, is precomputed and therefore not counted.

All tests were run with the cluster facilities of the High Performance Computing Center at UPC\footnote{\url{https://rdlab.cs.upc.edu/hpc/}},
ensuring that each execution is performed in an isolated environment 
with a dedicated Intel(R) Xeon(R) CPU X5650 at 2667 MHz and 1 Gb of memory.
This guarantees that all combinations are computed in the same conditions.

\subsection{Format of the Results}
\label{sec:format}

For every \real and \sintetic dataset we have summarized the results
of every measure,
i.e., processed dependencies, processed attributes, and running time.
For every combination \combinacio these results are plotted in 
Figures~\ref{fig:barres_realdeps-atrib-in-out}
and~\ref{fig:barres_syntheticdeps-atrib-in-out}.
Let us explain the content of these plots when calculating $\clo$
with a combination \combinacio and processing a dataset in \real.
For every dataset in $\real = \conjunt{D_1, D_2, \dots, D_{19}}$,
we compute $\clo$ for $50,000$ random sets of attributes
and we sum up all the processed dependencies,
i.e., $\deps(D_i) = \sum \deps(\clo(X))$,
where $\deps(\clo(X))$ denotes the number of processed dependencies.

Finally, we average all $\sum_{D_i \in \textbf{real}} \deps(D_i)$
for all combinations, 
leading to a \textit{grand total} for each of the nine combinations \combinacio. 
We repeated each computation $100$ times and averaged the values of all considered variables.
We normalized  to the interval $(0,100)$ and then plotted these grand totals. 

Regarding the \sintetic datasets, we also computed the evolution of the metrics
w.r.t. the number of attributes.
We grouped all the datasets with the same number of attributes  and computed the average for each metric.
We plotted the results in Figure~\ref{fig:linies_synthetic_liniessyntheticatribs}.
Here we only compare the best performing combinations \combinacio for every basis.

We also computed a ranking table recording how many times a combination \combinacio 
is the best performer in the computation of each metric.
The results are presented in Tables~\ref{table:ranking_real} and~\ref{table:ranking_synthetic}.
In particular, in Table~\ref{table:ranking_real},
every column is a combination \combinacio,
and every row is one of the computed metrics. 
For example, the score of \textit{Processed Dependencies} (first row)
for the combination \combi{\cdub}{\linclosuredir} is $10$.
This means that the combination \combi{\cdub}{\linclosuredir} is the best performer 
when computing \textbf{Processed Attributes} in $10$ \real datasets.
Since the total number of datasets in \real is $19$,
each row must sum, at least, $19$, but it may be larger,  as there can be ties,
i.e., there can be more than one \textit{winning combination}.

All the numerical results are presented in the technical report
\cite{DBLP:journals/corr/abs-2404-12229} (Appendices A and B).



%
\section{Results}
\label{sec:results}
\subsection{Impact of the Direct Versions of the Algorithms}
\label{subsec:validation}
%

We present an experimental comparison of the \textit{direct} versions of the 
algorithms, guided by the following question:
\textit{``How the changes performed in \closure, \linclosure, and \wild are relevant''?}
Recall that the theoretical differences are discussed in
Section~\ref{sec:impactcomplexity}.

Figures~\ref{fig:comparacio_real_CanonicalDirectUnitBasis}
and~\ref{fig:comparacio_synthetic_CanonicalDirectUnitBasis}
show the differences between the values of the indicators \texttt{deps}, \texttt{attributes}, and \texttt{time}, using the classical and direct versions of every algorithm, when processing the \cdub over \real and \sintetic datasets respectively.
Figures~\ref{fig:comparacio_real_DBasis} and~\ref{fig:comparacio_synthetic_DBasis}
show the same results when processing the \dbasis.

\compCANONICALreal
\compCANONICALsynthetic

Focusing on the running time and \real datasets, we can observe 
that algorithms \linclosure and \wild almost double their processing time w.r.t. 
their \textit{direct} versions, 
whereas \closuredir shows a slight improvement over \closure when processing both 
the \cdub and the \dbasis.

In \sintetic datasets the same tendency is observed, 
while the improvement of \linclosuredir and \wilddir over their non-direct counterparts
is even more acute.
Such results clearly show that running the direct version of an algorithm with a direct basis is much more efficient that running its classical version.

\compDBASISreal
\compDBASISsynthetic

\subsection{The \dgbasis is an Efficient Alternative}

\barresreal
\barressynthetic

Based on the results of the experiments, 
we now propose an answer to the main question formulated in the Introduction (\S~\ref{sec:introduction}):
\textit{``Is the \dgbasis more efficient than the \cdub and \dbasis (direct bases)
to compute the closure of a set of attributes?''}

Firstly, let us check the grand totals for all pairs \combinacio involving
\real and \sintetic datasets respectively and shown in
Figures~\ref{fig:barres_realdeps-atrib-in-out}
and~\ref{fig:barres_syntheticdeps-atrib-in-out}.
These totals provide a slightly different answer to the question.
Regarding the running time,
there are two most efficient combinations,
namely \combi{\cdub}{\wilddir} and \combi{\dbasis}{\wilddir} for \real datasets,
and \combi{\dgbasis}{\closure} and \combi{\dgbasis}{\wild} for \sintetic datasets.
However, in \real datasets, the difference between the direct bases and
the best cost with direct bases is $33$ versus $44$ for the \dgbasis
(w.r.t. a scale $0 - 100$).
In \sintetic datasets, the best cost is $6$ for the \dgbasis versus $25$ for the best combination involving a direct basis, i.e., \mbox{\combi{\closure}{\wilddir}}.
To sum up, in \real datasets the best performance is in favor of direct bases,
while in \sintetic datasets the best performance is in favor of the \dgbasis
in a \textit{much larger} proportion.

\taulareal

Now we examine Tables~\ref{table:ranking_real} and~\ref{table:ranking_synthetic},
where the \textbf{counting} of the best performing combinations is shown.
In \real datasets, the best performing combinations are \combi{\cdub}{\wilddir}
and \combi{\dbasis}{\wilddir} (see Table~\ref{table:ranking_real}).
There are also $6$ files, namely
\textit{page-blocks, atom-sites, soybean, hepatitis, mushrooms and anneal}
(\cite{DBLP:journals/corr/abs-2404-12229} Appendix A),
where the best combination is \combi{\dgbasis}{\closure}.
In these files the size of the \dgbasis is significantly smaller than the corresponding \cdub and \dbasis.

\taulasynthetic

In \sintetic datasets, the \dgbasis is the best performing basis in the vast 
majority of cases, actually in $4277 \, (4011+266)$ out of $5549$ datasets
while the \dbasis performs better in $1231 \, (1044+187)$ datasets,
and the \cdub only in $41$ datasets
(see Table~\ref{table:ranking_synthetic}).
A large amount of files was analyzed in the \sintetic datasets and
Figure~\ref{fig:linies_synthetic_liniessyntheticatribs} shows the tendency of
the three bases combined with the best performing algorithm.
It can be observed that \combi{\dgbasis}{\closure} reveals a steady tendency contrasting the two other combinations involving the \cdub and the \dbasis.

To summarize, we can observe that the \dgbasis is the best performing alternative 
--in terms of execution time-- in $82\%$ of \sintetic datasets,
i.e., a large majority,
whereas it is the best alternative in only $31\%$ of \real datasets,
which is not so weak.

\liniessyntheticatribs

\subsection{The Impact of the Basis Size}

The preceding results show that the \dgbasis is actually an alternative
to both \cdub and \dbasis.
We now try to characterize when this is the case, i.e.,
\textit{in which cases is the \dgbasis more efficient than both the \cdub and the \dbasis}.
As it can be expected, a main factor when comparing a minimal basis and direct bases 
is their ``relative size'', i.e., the proportion of the sizes of, say, 
the \cdub and the \dgbasis for a given dataset.
Below we show that this metric is relevant to answer the question above.

\comparacioCDUBDGBasissynthetic

Figure~\ref{fig:comparacioCDUBDGBasissynthetic} shows the proportion of
combinations \combi{\dgbasis}{\closure} and \mbox{\combi{\cdub}{\wilddir}} 
that are the most efficient w.r.t. computational time for each basis.
The X-axis indicates the size proportion |\cdub|/|\dgbasis|. 
For example, the bar $X = 1.4$ counts all the \sintetic datasets
such that $1.3 < \frac{|\cdub|}{|\dgbasis|} \leq 1.4$.
The blue and orange colors show the proportion of cases in which one of the
combinations \combi{\dgbasis}{\closure} (orange) or \combi{\cdub}{\wilddir} (blue)
is the best performing (w.r.t. a scale of $0 - 100$).

Stated differently, the bar $X = 1.4$ tells us that whenever 
the \cdub size is between $1.3$ and $1.4$ times larger than the related \dgbasis size, the pair \combi{\cdub}{\wilddir} is the best performing combination w.r.t. 
running time in $53\%$ of the cases.

By contrast, when $X = 2$, the pair \combi{\dgbasis}{\closure} is the best 
performing combination in $95\%$ of the cases.
Thus, as soon as the size of the \dgbasis is the half of the size of the \cdub, 
the \dgbasis is more efficient for computing the closure of a set of attributes 
when \sintetic datasets are processed.
These results can be observed as well in \real datasets.

\subsection{Impact of the Selected Algorithm}
We have discussed the behavior of the three bases in combination
with one of the three different algorithms to compute the closure of a set of attributes.
The next question is to verify the influence of the closure algorithm used in a combination:
\textit{``Does the performance of a basis depend on the algorithm which is used?''}
In Figures~\ref{fig:barres_realdeps-atrib-in-out}
and~\ref{fig:barres_syntheticdeps-atrib-in-out},
the performance of the three \textbf{direct} versions of the algorithms
\closuredir, \linclosuredir, and \wilddir,
in combination with the direct bases \cdub and \dbasis 
roughly corresponds to the theory: 
\wild (\wilddir) is an improvement
of \linclosure (\linclosuredir) which, in turn,
is an improvement of \closure (\closuredir).

By contrast, \wild or \closure algorithms show a similar performance
when combined with the \dgbasis,
whereas \linclosure is slightly less efficient.
More precisely, we can observe in Table~\ref{table:ranking_synthetic} that
the best performing algorithm is \closure with $4011$ best cases versus $266$
for \wild.
This difference justifies the use of all three algorithms in our experiments,
since this allows to compare the best options \combinacio in each case
and to prevent any bias w.r.t. an algorithm.

Since the algorithms used in combination with the \dgbasis are the 
\textit{original} versions without any modification, one possible explanation
of this difference could be that the changes performed in \linclosuredir and in \wilddir are more effective than those performed in \closuredir.
This can be seen in Figures~\ref{fig:barres_realdeps-atrib-in-out} 
and~\ref{fig:barres_syntheticdeps-atrib-in-out},
where the number of processed dependencies by \closuredir is between two and three times --in \real and \sintetic datasets respectively-- larger than the number processed by \linclosuredir and \wilddir.
It can be inferred that the number of processed dependencies is indeed a determining factor of the asymptotic complexity of these algorithms.

\barresNoDirectreal
\barresNoDirectsynthetic

A related question is
\textit{``What happens when the three bases are processed with the classical version of the algorithms?''}
Figures~\ref{fig:barres_realdeps-atrib-no-direct} and~\ref{fig:barres_syntheticdeps-atrib-no-direct}
show the comparison between the three bases when DBLP:journals/corr/abs-2404-12229they are processed with the \textit{classical} 
versions of \closure, \linclosure, and \wild.
We can observe that, overall, the best performing basis is the \dgbasis in both \real and \sintetic datasets.

Again these results justify the use of the adapted \textit{direct} versions of the algorithms since they make the direct bases much more competitive w.r.t the \dgbasis.

\subsection{Impact of the Number of Processed Dependencies and Attribute Operations}

To terminate, let us examine a final question
\textit{``Besides the running time, do the number of processed dependencies and the number of attribute operations have any influence when processing the bases?''}.
These two important metrics are significant because:
(i) the number of processed dependencies is of capital importance in analyzing the asymptotic complexity of the algorithms,
and (ii) the number of attribute operations, although usually neglected in such a complexity analysis, is mentioned in~\cite{Wild95} as a relevant factor in the running time of the algorithms:
\textit{``Doing few set operations with big sets is better than doing
many set operations with small sets.''}

Table~\ref{table:correlations} shows the correlations between
\textit{attribute operations} and \textit{processed dependencies} 
w.r.t. \textit{running time}.
These high values are somehow expected from the theoretical complexity analysis of the algorithms.
If we examine the grand totals in Figures~\ref{fig:barres_realdeps-atrib-in-out} 
and~\ref{fig:barres_syntheticdeps-atrib-in-out},
we can also observe similar tendencies for all three variables.

\begin{table} 
\centering 
\resizebox{\textwidth}{!}{ 
\begin{tabular}{|l||c|c|c||c|c|c|} 
\hline
 &  \multicolumn{3}{c|}{Real} & \multicolumn{3}{c|}{Synthetic} \\
\hline
 & Closure & LinClosure & WildClosure  & Closure & LinClosure & WildClosure \\
\hline \hline
dependencies & 0.99 & 0.97 & 0.99 & 0.99 & 0.98 & 0.99 \\
\hline
attribute & 0.99 & 0.97 & 0.99 & 0.99 & 0.97 & 0.99 \\
\hline
\end{tabular} 
} 
\caption{%
Spearman correlations of the metrics 
\textbf{number of dependencies, attribute operations} w.r.t. the metric
\textbf{execution time} for each used algorithm.
All the correlations are significant ($p-value << 0.05$).
} 
\label{table:correlations} 
\end{table} 

Finally, all these results are witnessing the fact that the \dgbasis can efficiently 
compete with the two direct bases \cdub and \dbasis, even when the latter are processed 
at their full capacity, i.e., using the direct versions of the algorithms computing a closure.



\section{Discussion and Conclusion}
\label{sec:discussion}

We performed a series of experiments over different datasets to answer one main question,
i.e., \textit{``Is it better to use a direct basis or a minimal basis
to compute a closure $\clo$''}?
The results show a contrasted answer as the \cdub is the best option in \real
while the \dgbasis is the best option in \sintetic datasets.
The size proportion of the bases may explain such results.
However, a crucial fact is that the \cdub and the \dbasis should be processed with a \textit{direct} algorithm, otherwise the best option in all cases remains the \dgbasis.
This makes the \dgbasis a quite strong and valuable alternative to the \cdub and the \dbasis in many situations.

In the experiments we also combine the three bases with the three closure algorithms to prevent any algorithm bias, i.e., the use of a single algorithm could favor one of the bases.
Indeed, it is observed that the best performing algorithm working with the \dgbasis
is \closure, whereas it is \wilddir with the two direct bases.

Before concluding, let us examine two last remarks.
Firstly, all results are valid provided that the bases do not change over the time.
How would the nine combinations \combinacio change when the bases dynamically change is an open question.
Secondly, in this paper ``bases of dependencies'' are processed rather than
``random sets of dependencies''.
This important difference involves some restrictions in the computing of the bases,
and in particular the size of the attribute set cannot be arbitrarily too large.

To sum up, the answer to the main question
\textit{``Is the \dgbasis a competitive alternative to compute the closure $\clo$''}
is positive provided that the size of the \dgbasis is sufficiently smaller than the 
sizes of the \cdub or the \dbasis.
As it can be observed that the sizes of \cdub, \dbasis, and \dgbasis,
tend to be similar in many real datasets, the \dgbasis appears to be a valuable 
alternative w.r.t. the \cdub and the \dbasis

Regarding future work, this study should be continued in many directions as shown by the questions raised in the experiments, among which:
given a dataset, can we design relevant metrics able to decide when a \dgbasis --or a \dbasis-- 
will contain significantly sufficiently less dependencies than the equivalent \cdub, 
leading to a good combination choice for computing a closure.
Similarly, more insight should be provided about the impact of the number of attributes and the size of the datasets on the performance of the different combinations that are tested.
Finally, a next step is to publicize the capabilities of the \dgbasis in the DB community 
based on the results and the analysis presented in this present paper.
This is indeed one main future work.

\section{Acknowledgements}

Jaume Baixeries is supported by a recognition 2021SGR-Cat (01266 LQMC) 
from AGAUR (Generalitat de Catalunya) and the
grants AGRUPS-2022 and AGRUPS-2023 from Universitat Politècnica de Catalunya.
Amedeo Napoli is carrying out this research work as part of the French ANR-21-CE23-0023 SmartFCA Research Project.


\bibliographystyle{plain}
\bibliography{bibliography}


\appendix

\section{Experiments with Real Datasets}
\label{app:realdatasets}


\begin{table} 
\centering 
\resizebox{\textwidth}{!}{ 

 
} 
\caption{Total values of \textbf{\real datasets} per all analyzed measures: number of dependencies processed, number of operations on attributes, outer loops, inner loops  and computation time in miliseconds. $|\Sigma|$: size of the base. $|\mathcal{U}|$: number of attributes} 
\label{table:total_real_7} 
\end{table}


\section{Experiments with Synthetic Datasets}
\label{app:syntheticdatasets}


\begin{table} 
\centering 
\resizebox{\textwidth}{!}{ 

 
} 
\caption{Total values of \textbf{\sintetic datasets} per all analyzed measures: number of dependencies processed, number of operations on attributes, outer loops, inner loops  and computation time in miliseconds. $|\Sigma|$: size of the base. $|\mathcal{U}|$: number of attributes} 
\label{table:total_synthetic_14} 
\end{table}


\end{document}